\numberwithin{equation}{section}
\newcommand{\ind}[1]{\ensuremath{\mathds{1}_{#1}}}
\newcommand{\C}{\mathcal{C}}
\newtheorem{lemma}{Lemma}[section]
\newtheorem{proposition}{Proposition}[section]
\theoremstyle{definition}
\newtheorem{remark}{Remark}[section]
\begin{document}

\author{Diaa Al Mohamad\footnote{Corresponding author. E-mail: diaa.almohamad@gmail.com} \;\; \; \;\; Jelle J. Goeman \;\;\;\;\; Erik W. van Zwet
 \\ \normalsize{Leiden University Medical Center, The Netherlands} \\
Eric A. Cator \\
\normalsize{Radboud University, The Netherlands}
\\
Last update 
}

\title{Adaptive Critical Value for Constrained Likelihood Ratio Testing}
%\author[D. Al Mohamad {\it et al.} ]{Diaa Al Mohamad}
%\address{Leiden University Medical Center, Leiden, The Netherlands.}
%\email{diaa.almohamad@gmail.com}
%\author{Erik W. van Zwet}
%\address{Leiden University Medical Center, Leiden, The Netherlands.}
%\author{Jelle J. Goeman}
%\address{Leiden University Medical Center, Leiden, The Netherlands.}
%\author{Aldo Solari}
%\address{University of Milano-bicocca, Milano, Italy}

\maketitle

\begin{abstract} 
We present a new way of testing ordered hypotheses against all alternatives which overpowers the classical approach both in simplicity and statistical power. Our new method tests the constrained likelihood ratio statistic against the quantile of one and only one chi-squared random variable with a data-dependent degrees of freedom instead of a mixture of chi-squares. Our new test is proved to have a valid finite-sample significance level $\alpha$ and provides more power especially for sparse alternatives (those with a few or moderate number of null constraints violations) in comparison to the classical approach. Our method is also easier to use than the classical approach which requires to calculate or simulate a set of complicated weights. Two special cases are considered with more details, namely the case of testing orthants $\mu_1<0, \cdots, \mu_n<0$ and the isotonic case of testing $\mu_1<\mu_2<\mu_3$ against all alternatives. Contours of the difference in power are shown for these examples showing the interest of our new approach. \\ 

\textbf{Keywords:} Polyhedral cones, Ordered hypothesis, Conditional likelihood ratio, Isotonic regression, PAVA, Orthants, Power. 
\end{abstract}
 
\tableofcontents
\section*{Introduction}
The problem of testing ordered hypotheses or constrained likelihood ratio tesing debuted mainly in the 50's of the $20^{th}$ century with the works of \cite{Brunk}, \cite{vanEeden} and \cite{BartholomewPAVA} (see for example \citet{ReviewRobertsonBook} for a brief history). The developments achieved by several authors in the next two decades were put together in the book \cite{BarlowBook} where they test the hypothesis $\mu_1=\cdots=\mu_p$ against some order restriction on the real vector $\mu=(\mu_1,\cdots,\mu_p)$. Later developments came to light afterwards expanding the testing problem to more complicated forms of nulls and alternatives especially testing if a vector $\mu$ is in some linear space against the alternative that it is inside some cone. The book of \cite{RobertsonBook} was published to cover for these developments. We call this kind of problems (following the notation in \citet{SilvapulleBook}) "type A" problems. Another type of testing problems had already surfaced in some papers such as \cite{Robertson78} and was then generalized to a multivariate context in papers such as \cite{Shapiro}. In this kind of problems that we call "type B" problems, the null hypothesis defines an order over the the vector of interest $\mu$ of the form $A\mu\geq 0$ for some matrix $A$, and test it against all alternatives. The most recent findings and developments concerning type B problems were gathered and published in the book of \cite{SilvapulleBook}. The book covers as well a more general approach for these testing problems and is the main reference in our paper. \\
Applications of constrained likelihood ratio testing could be found in various fields. For example in linear regression, it is sometimes relevant to assume that the parameters related to some ordinal covariate follow a specific order. According to \cite{IcInferPackage}, a great attention to constrained LRT could be found in econometrics where a software called GAUSS (Aptech Systems Inc. 2009) is developed for this purpose. Several interesting examples from different fields of study could be found in \cite[Chap. 1]{SilvapulleBook}.\\
Let $Y\sim\mathcal{N}_p(\mu,V)$. In the literature, the constrained LR for testing $A\mu\geq 0$ against all alternatives (type B) has a least favorable distribution when $\mu=0$. The distribution is a mixture of chi-squares with weights $w_1,\cdots,w_K$ (for some $K$) which depend on both the matrix $A$ and the covariance structure $V$. This makes these weights very difficult to calculate for $p\geq 5$. Very few special cases for the matrix $A$ and when there is no dependence structure in the data, that is $V=I$, were solved in the literature and explicit formulas for the weights could be found (\citet{Miles59}, \citet[Section 3.5]{SilvapulleBook}, \citet[Examples 15,16]{Drton}, \citet{HoshinoIsotonTree}). For the general case, the R package \texttt{ic.infer} (\citet{IcInferPackage}) provides functionalities for this calculation. \cite{IcInferPackage} provides insight on how difficult and computationally extensive this calculation becomes when the dimension $p$ increases.\\
There have appeared in the literature some papers proposing a different approach for constrained LRT where instead of testing against a quantile of a mixture, they test against the quantile of one chi-square with data-dependent degrees of freedom (\citet{BartholomewConditional}, \citet{Susko}, \citet{ChenConditional}, \cite{WollanDykstra},\citet{IversonHarp},\citet{Rueda}). This approach avoids the calculation of the weights so that it is computationally very easy, but this comes at the cost of a reduction in power for type A problems (\citet{BartholomewConditional}, \citet{Susko}, \citet{ChenConditional}). For type B problems, this approach was proposed in the isotonic case of testing $\mu_1\leq\cdots\leq\mu_p$ against all alternatives independently by \cite{WollanDykstra} and \citet{IversonHarp} (see \citet{Rueda} for an application). Authors of these papers conjectured that the test has a valid $\alpha$ level and provided only asymptotic evidence (that is when the common variance go to zero) that it the significance level is valid and verified it through simulations. They argue that not only that the new approach is computationally more feasible (because no weights need to be calculated), but that it is also more powerful than the classical approach of testing against a quantile of a mixture of chi-squares in some regions of the space of the parameters.\\

We consider in this paper the type B testing problem that is to test $A\mu\geq 0$ for some matrix $A$ against all alternatives using the LRT. We adopt the approach proposed by \cite{WollanDykstra} and \citet{IversonHarp} for testing the LR statistic against the quantile of one chi-square with data-dependent degrees of freedom. We prove that the corresponding test has indeed a valid $\alpha$ level and not only asymptotically, and thus prove the conjecture given by \cite{WollanDykstra} and \citet{IversonHarp} for the isotonic case. We argue as well that the power of the new approach is indeed greater than the classical method of testing with mixtures and show more details and insights for the case of testing $\mu\leq 0$ and for the isotonic case. For example, for testing $\mu\leq 0$, it is shown that as long as the true vector $\mu$ does not violate more than half the number of constraints that is $p/2$, then our approach is more powerful than the classical method of mixtures.\\
The case when $V=\sigma^2\Sigma$ with $\Sigma$ is known and $\sigma^2$ is unknown is a straightforward generalization of our approach and is thus considered in this paper.\\

The paper is organized as follows. In Section \ref{sec:context}, we introduce briefly the testing problem considered in this paper, namely type B, with the used notations. In Sections \ref{sec:typeB2VarQuad}-\ref{sec:typeBGeneral}, we provide our new solution for testing type B problems first in special cases moving towards the general case. A comparison of the difference in power between the classical approach in the literature and our new approach is also illustrated through these sections in special cases. More details are given for the special case of orthants hypotheses and isotonic ones in Section \ref{sec:Examples}. The case of known covariance matrix is discussed in Section \ref{sec:UnKnownVar}. Finally, in Section \ref{sec:PowerCompare} we provide some arguments to when our new approach overpowers the classical one. Appendices \ref{append1}, \ref{append2}, \ref{append3} and \ref{append4} include the type A testing problem already consided in the literature and some alternative proofs for special cases of type B problems.

%%%%%%%%%%%%%%%%%%%%%%%%%%%%%%%%%%%%%%%%%%%%%%%%%%%%%%%%%%%
% ====================================
% --------------------
% ====================================
%%%%%%%%%%%%%%%%%%%%%%%%%%%%%%%%%%%%%%%%%%%%%%%%%%%%%%%%%%%

\section{The problem of inequality constrained testing: Context and notations}\label{sec:context}
We say that $\mathcal{C}$ is a polyhedral (cone) if there exist $a_1,\cdots,a_k\in\mathbb{R}^p$ such that 
\begin{align*}
\C=&\{\mu\in\mathbb{R}^p: a_1^T\mu\geq 0,\cdots,a_k^T\mu\geq 0\} \\
 = & \{\mu\in\mathbb{R}^p: A\mu\geq 0\}.
\end{align*}
Note that a polyhedral is a closed convex cone. For $y\in\mathbb{R}^p$, we denote $\|y\|^2=y^ty$ the usual norm and $\|y\|_V^2=y^tV^{-1}y$ for a symmetric definite positive matrix $V$. The projection of $y$ on $\C$ is denoted $\Pi(y|\C)$ ($\Pi_V(y|\C)$, resp.) with respect to the usual norm (the norm $\|.\|_V$).
%A polyhedral could also be represented by 
%\[\C = \{Ax, x\in\mathbb{R}^m, x\leq 0\}\]
 %for some matrix $A\in\mathbb{R}^{p\times m}$.\\
A Type B testing problem tests the null $H_1:\mu\in\mathcal{C}$ against $H_2:\mu\in\mathbb{R}^p$. This includes for example the isotonic hypothesis $H_1:\mu_1<\mu_2<\cdots<\mu_p$. Here $\C=\{\mu\in\mathbb{R}^p: \mu_{i+1}-\mu_i\geq 0, \forall i=1,\cdots,p-1\}$. \\
For notational simplicity, we denote $q_i$ the $(1-\alpha)$quantile of the chi-square distribution with $i$ degrees of freedom. For other orders, we use $q_{\chi^2(i)}(1-\bar{\alpha})$ for some $\bar{\alpha}\neq\alpha$.\\
Let $Y\sim\mathcal{N}(\mu,V)$. The LR statistic takes the form
\begin{equation}
LR(type B) = \min_{\mu\in\mathcal{C}} (Y-\mu)^TV^{-1}(Y-\mu)
\label{eqn:LRTypeB}
\end{equation}
and the least favorable null distribution is defined by
\begin{equation}
\mathbb{P}_{\mu=0}(LR(type B)\leq c) = \sum_{i=0}^p{w_{p-i}(p,V,C)\mathbb{P}(\chi^2(i)\leq c)}.
\label{eqn:ChiBarTypeB}
\end{equation}
%The weights $w_i$ do not have an explicit formula except for some special cases.
%\begin{enumerate}
%\item When $V=I$ and $C$ is the positive orthant, then
%\begin{equation}
%w_i(p) = 2^{-p}\frac{p!}{i!(p-i)!}, \quad \forall i=0,\cdots,p.
%\label{eqn:WeightsOrthant}
%\end{equation}
%
%\item In the isotonic setup $\{\mu_1\leq\cdots\leq\mu_n\}$, the weights could also be calculated explicitly through Stirling numbers of the first kind as long as $V=I$.
%\[w_i(p) = \frac{|S_p^i|}{p!}\]
%and the numbers $|S_p^i|$ for $l=1,...,p$ are the unsigned Stirling numbers of the first kind (See \cite{Miles59}).\\
%\item In the case of the isotonic setup $\{0\leq\mu_1\leq\cdots\leq\mu_n\}$, then the weights are the coefficients of the polynomial $(t-1)(t-3)...(t-(2n-1))$ (\citet[Example 16]{Drton}).\\
%\item In the case of the isotonic setup $\{|\mu_1|\leq\mu_2\leq\cdots\leq\mu_n\}$, then the weights are the coefficients of the polynomial $(t-1)(t-3)...(t-2n+3)(t-n+1)$ (\citet[Example 17]{Drton}).\\
%\item Some computational formulas for other tree-type structures are also available (\citet{HoshinoIsotonTree}).
%\end{enumerate}

%%%%%%%%%%%%%%%%%%%%%%%%%%%%%%%%%%%%%%%%%%%%%%%%%%%%%%%%%%%
% ====================================
% --------------------
% ====================================
%%%%%%%%%%%%%%%%%%%%%%%%%%%%%%%%%%%%%%%%%%%%%%%%%%%%%%%%%%%
\section{Adaptive quantile for type B testing}
We show in the following examples how the adaptive critical value is used and compare it with the classical way of dealing with constrained LRT. We show in each of the examples how we could prove that the significance level of the test with an adaptive critical value is exactly $\alpha$ before moving to the general case. 
\subsection{A one dimensional case}\label{subsec:OneVarTypeB}
We test $H_1:\mu_1\leq 0$ against all alternatives. The LR is given by
\[ LR(Y) = 
 \left\{\begin{array}{rcl}
0 & \text{if} & Y_1\leq 0 \\
q_1 & \text{if} & Y_1\geq 0
\end{array}\right.
\]
The classical approach tests the LR against the quantile $q$ such that
\begin{equation}
\mathbb{P}_{H_1}((Y_1-\mu_1)^2>q) = \frac{\alpha}{2}
\label{eqn:OneVarCrit}
\end{equation}
We propose to define the critical random value
\[ q(Y,\alpha) = 
 \left\{\begin{array}{rcl}
0 & \text{if} & Y_1\leq 0 \\
q_1 & \text{if} & Y_1\geq 0
\end{array}\right.
\]
We have
\[
\mathbb{P}_{H_1}\left(LR>q(Y,\alpha)\right) = \mathbb{P}_{H_1}\left(Y_1^2>q_1|Y_1\geq 0\right)\mathbb{P}_{H_1}\left(Y_1\geq 0\right) \]
 In the previous display, since the event $\{Y_1\geq 0\}$ implies the event $\{Y_1-\mu_1\geq 0\}$, then 
\[\mathbb{P}_{H_1}\left(Y_1\geq 0\right)\leq \mathbb{P}_{H_1}\left(Y_1-\mu_1\geq 0\right) =\frac{1}{2}\]
Moreover, using the fact that the Gaussian model has an increasing likelihood ratio, we could show that
\[\mathbb{P}_{H_1}\left(Y_1^2>q_1|Y_1\geq 0\right)\leq \mathbb{P}_{H_1}\left((Y_1-\mu_1)^2>q_1|Y_1-\mu_1>0\right)\]
Finally,
\[\mathbb{P}_{H_1}\left(LR>q(Y,\alpha)\right) \leq \frac{\alpha}{2}\]
In comparison to the classical approach (\ref{eqn:OneVarCrit}), it is clear that they both have the same statistical power under the alternative since they both test against the quantile of $\chi^2(1)$ of order $\alpha/2$.
%%%%%%%%%%%%%%%%%%%%%%%%%%%%%%%%%%%%%%%%%%%%%%%%%%%%%%%%%%%%
\subsection{The two dimensional case}\label{sec:typeB2VarQuad}
Let $Y\sim(\mu,I_2)$ be a bivariate Gaussian random variable. Consider the testing problem
\[H_1: \mu_1\leq 0, \mu_2\leq 0, \text{ against } H_2: \mu\in \mathbb{R}^p.\]
We are testing if the vector $\mu$ is in the negative quadrant of the plane. The likelihood ratio for this test is given by
\[LR = \left\{\begin{array}{rcl}
0 & \text{if} & Y_1\leq 0, Y_2\leq 0 \\
Y_1^2 & \text{if} & Y_1\geq 0, Y_2\leq 0 \\
Y_2^2 & \text{if} & Y_1\leq 0, Y_2\geq 0 \\
Y_1^2+Y_2^2 & \text{if} & Y_1\geq 0, Y_2\geq 0.
\end{array}\right.\]
In the literature, testing $H_1$ against $H_2$ at level $\alpha$ using the LRT is done by looking for $c>0$ such that
\[\frac{1}{2}\mathbb{P}(\chi^2(1)>c) + \frac{1}{4}\mathbb{P}(\chi^2(2)>c) \leq \alpha.\]
The least favorable distribution of the LR is a mixture of the chi-squares $\chi^2(2), \chi^2(1)$ and $\chi^2(0)$ with respective weights $\frac{1}{4}, \frac{1}{2}, \frac{1}{4}$. We propose in this paper to do the test differently. Since the LR is conditionally distributed as a $\chi^2(1)$ when $Y$ has one negative coordinate and one positive one, then we will only test the LR against the quantile of the $\chi^2(1)$. When $Y$ is in the positive quadrant, the LR is conditionally distributed as a $\chi^2(2)$, and we test the LR against the quantile of the $\chi^2(2)$. In other words, we define the random quantile
\[ q(Y,\alpha) = 
 \left\{\begin{array}{rcl}
0 & \text{if} & Y_1\leq 0, Y_2\leq 0 \\
q_1 & \text{if} & Y_1\geq 0, Y_2\leq 0 \text{ or } Y_1\leq 0, Y_2\geq 0\\
q_2 & \text{if} & Y_1\geq 0, Y_2\geq 0.
\end{array}\right.
\]
We test $H_1$ against $H_2$ using the rejection region 
\[\left\{LR>q(Y,\alpha)\right\}.\]
Our claim is that 
\[\mathbb{P}_{H_1}\left(LR>q(Y,\alpha)\right)\leq \alpha.\]
We can decompose the probability of rejection according to the position of the vector $Y$ in the plane. Denote $Q_1,Q_3$ the nonnegative and non-positive quadrants respectively, and $Q_2, Q_4$ the quadrants $\{y\in\mathbb{R}^2, y_1<0, y_2>0\}$ and $\{y\in\mathbb{R}^2, y_1>0, y_2<0\}$. We have
\begin{multline}
\mathbb{P}_{H_1}\left(LR>q(Y,\alpha)\right) = \mathbb{P}_{H_1}\left(Y_1^2+Y_2^2>q_2| Y\in Q_1\right)\mathbb{P}_{H_1}\left(Y\in Q_1\right) + \\ 
\mathbb{P}_{H_1}\left(Y_1^2>q_1| Y\in Q_4\right)\mathbb{P}_{H_1}\left(Y\in Q_4\right)+ 
\mathbb{P}_{H_1}\left(Y_2^2>q_1| Y\in Q_2\right)\mathbb{P}_{H_1}\left(Y\in Q_2\right)
\label{eqn:All4Quartiles}
\end{multline}
Since $Y_1$ and $Y_2$ are assumed independent, we may write
\begin{multline*}
\mathbb{P}_{H_1}\left(LR>q(Y,\alpha)\right) = \mathbb{P}_{H_1}\left(Y_1^2+Y_2^2>q_2| Y\in Q_1\right)\mathbb{P}_{H_1}\left(Y\in Q_1\right) + \\ 
\mathbb{P}_{H_1}\left(Y_1^2>q_1| Y_1\geq 0\right)\mathbb{P}_{H_1}\left(Y\in Q_4\right)+ 
\mathbb{P}_{H_1}\left(Y_2^2>q_1| Y_2\geq 0\right)\mathbb{P}_{H_1}\left(Y\in Q_2\right)
\end{multline*}
We state that (see Lemma \ref{Lemm:ConditionalIncrease}) 
\begin{align*}
\mathbb{P}_{H_1}\left(Y_1^2>q_1| Y_1\geq 0\right) \leq & \mathbb{P}_{H_1}\left((Y_1-\mu_1)^2>q_1| Y_1-\mu_1\geq 0\right) \\
\mathbb{P}_{H_1}\left(Y_2^2>q_1| Y_2\geq 0\right) \leq & \mathbb{P}_{H_1}\left((Y_2-\mu_2)^2>q_1| Y_2-\mu_2\geq 0\right)\\
\mathbb{P}_{H_1}\left(Y_1^2+Y_2^2>q_2| Y\in Q_1\right) \leq & \mathbb{P}_{H_1}\left((Y_1-\mu_1)^2+(Y_2-\mu_2)^2>q_2| Y-\mu\in Q_1\right)
\end{align*}
Now, since the norm of a standard Gaussian random variable is independent from its direction (\citet[Lemma 3.13.1]{SilvapulleBook}), then each of the previous conditional probabilities is less than $\alpha$. Thus
\begin{align*}
\mathbb{P}_{H_1}(LR>q(Y,\alpha)) \leq & \alpha\left(\mathbb{P}_{H_1}\left(Y\in Q_1\right)+\mathbb{P}_{H_1}\left(Y\in Q_2\right)+\mathbb{P}_{H_1}\left(Y\in Q_4\right)\right) \\
\leq & (1-\mathbb{P}_{H_1}(Y\in Q_3))\alpha \\
\leq & \frac{3}{4}\alpha \\
< & \alpha.
\end{align*}
It is interesting to note that $q(Y,\alpha)$ could be redefined so that we exploit the remaining $\frac{1}{4}\alpha$ in order to make the test even more powerful. Therefore, we redefine it as follows
\[ \bar{q}(Y,\alpha) = q\left(Y,\frac{4}{3}\alpha\right) = 
 \left\{\begin{array}{rcl}
0 & \text{if} & Y_1\leq 0, Y_2\leq 0 \\
q_{\chi^2(1)}(1-\frac{4}{3}\alpha) & \text{if} & Y_1\geq 0, Y_2\leq 0 \text{ or } Y_1\leq 0, Y_2\geq 0\\
q_{\chi^2(2)}(1-\frac{4}{3}\alpha) & \text{if} & Y_1\geq 0, Y_2\geq 0.
\end{array}\right.
\]
For example for $\alpha=0.05$, we test against chi-squared quantiles at level $0.93$ instead of the usual level that is $0.95$. Figure (\ref{fig:PowerDiffTypeB}) shows the difference in power between our proposed method denoted as "Selective" and the method in the literature denoted as "Mixture".\\
Our new approach seems to be more powerful in the two quadrants $Q_2$ and $Q_4$ along the borders of the null hypothesis, whereas the classical approach is more powerful only in the positive quadrant $Q_1$ inside the circle centered at approximately $(2.5,2.5)$ with radius $2.5$. The contours show that in the regions where the adaptive critical value is used, the gain reaches $8\%$ whereas in the regions where the mixture quantile is used, the gain reaches only $7\%$. 
\begin{figure}[ht]
\centering
\includegraphics[scale = 0.6,clip,trim=2cm 2cm 2cm 1.6cm]{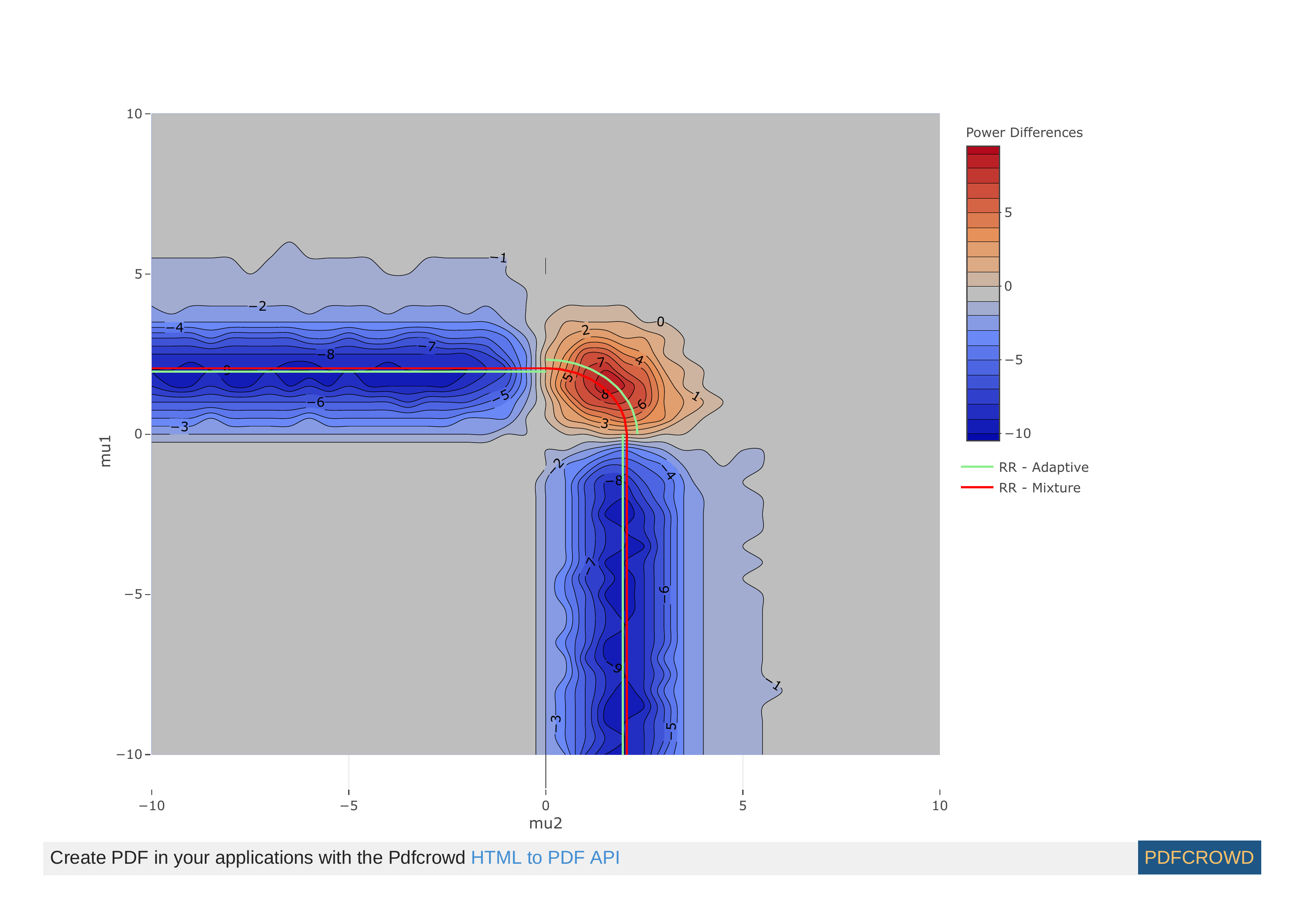}
\caption{Contours for the difference in power $100\times$(Classical$-$New). The red line represents the rejection line for the classical approach whereas the green one represents the rejection line for our approach. Figure produced using package \texttt{Plotly}}
\label{fig:PowerDiffTypeB}
\end{figure}

%%%%%%%%%%%%%%%%%%%%%%%%%%%%%%%%%%%%%%%%%%%%%%%%%%%%%%%%%%%
\subsection{The isotonic hypothesis with three means}
We test the null hypothesis $\mu_1<\mu_2<\mu_3$ against all alternatives based on 3 random variables $Y_1,Y_2$ and $Y_3$ drawn independently from $\mathcal{N}(\mu_i,1)$ for $i=1,2,3$. Note that testing $\mu_1\leq\mu_2$ is equivalent to testing $\mu_2-\mu_1\leq 0$ which we discussed in paragraph \ref{subsec:OneVarTypeB}.\\
Let $\bar{Y}$ be the constrained maximum likelihood that we calculate using the pool adjacent violators algorithm or the PAVA (\citet{BartholomewPAVA}, \citet{vanEeden}). Define $r(\bar{Y})$ as the number of equal coordinates (the number of pooled ones). Note that $r=0$ if $Y_1<Y_2<Y3$, and $r=2$ if $Y_3<Y_1<Y_2$. Define the critical random value as follows
\[ q(Y,\alpha) = q_{r(\bar{Y})}.\]
We need to show that
\[\mathbb{P}_{H_1}\left(LR>q(Y,\alpha)\right)\leq \alpha\]
According to the relative positions of the $Y$'s, the number of terms in the LR changes. Denote $S_i$ for $i=0,\cdots,7$ these cases with $S_0=\{y\in\mathbb{R}^3: y_1\leq y_2\leq y_3\}$. We have
\[\mathbb{P}_{H_1}\left(LR>q(Y,\alpha)\right) = \sum_{i=1}^{3}{\mathbb{P}_{H_1}\left(LR>q_{r(\bar{Y})}| Y\in S_i\right)}\mathbb{P}_{H_1}\left(Y\in S_i\right)\]
The objective is to show that each one of these probabilities is maximized when $Y$ has a mean equal to zero. The LR is given by
\[ LR = 
 \left\{\begin{array}{rcl}
0 & \text{if} & Y_1<Y_2<Y_3  \\
(Y_1-Y_2)^2 & \text{if} & \{Y_2< Y_1<Y_3\} \text{ or } \{ Y_2< Y_3<Y_1,\; (Y_1+Y_2)/2<Y_3\} \\
(Y_2-Y_3)^2/2 & \text{if} & \{Y_1<Y_3<Y_2\} \text{ or } \{Y_3<Y_1<Y_2,\; (Y_2+Y_3)/2>Y_1\} \\
\sum_{i=1}^3{(Y_i-\hat{\mu})^2} & \text{if} & \{Y_3<Y_2<Y_1\} \text{ or } \{Y_2< Y_3<Y_1,\; (Y_1+Y_2)/2>Y_3\} \\
 &  & \text{ or } \{Y_3<Y_1<Y_2,\; (Y_2+Y_3)/2<Y_1\}\\
\end{array}\right.
\]
Define $U_1 = (Y_1-Y_2)/\sqrt{2}$ and $U_2 = (Y_1+Y_1-2Y_3)/\sqrt{6}$. Define also $\nu_1 = (\mu_1-\mu_2)/\sqrt{2}$ and $\nu_2 = (\mu_1+\mu_2-2\mu_3)/\sqrt{6}$. Note that $U_i\sim\mathcal{N}(\nu_i,1)$ for $i=1,2$ and $U_1$ is independent from $U_2$. Note also that (\citet[Theorem 5]{Apostol})
\begin{align*}
\sum_{i=1}^3{(Y_i-\hat{\mu})^2} = & \frac{1}{2}(Y_1-Y_2)^2 + \frac{2}{3}\left(Y_3 - \frac{Y_1+Y_2}{2}\right)^2 \\
 = & U_1^2 + U_2^2
\end{align*}
The LR can be rewritten as
\[ LR = 
 \left\{\begin{array}{rcl}
0 & \text{if} & U_1<0,\;\sqrt{3}U_2-U_1<0 \\
U_1^2 & \text{if} &  U_1>0,\; U_2>0\\
\frac{1}{4}\left(\sqrt{3}U_2-U_1\right)^2 & \text{if} & \sqrt{3}U_2-U_1>0,\; \sqrt{3}U_2+3U_1<0 \\
U_1^2+U_2^2 & \text{if} & U_1>0,\; \sqrt{3}U_2+3U_1>0\\
\end{array}\right.
\]
Thus,
\begin{align*}
\mathbb{P}_{H_1}\left(LR>q(Y,\alpha)\right) = & \mathbb{P}_{H_1}\left(U_1^2>q_1\left| U_1>0, U_2>0\right.\right)\mathbb{P}_{H_1}\left(U_1>0, U_2>0\right) + \\
 & \mathbb{P}_{H_1}\left(\frac{1}{4}\left(\sqrt{3}U_2-U_1\right)^2 >q_1\left| \sqrt{3}U_2-U_1>0,\; \sqrt{3}U_2+3U_1<0\right.\right) \times \\
& \qquad \quad \mathbb{P}_{H_1}\left(U_2>0,\; \sqrt{3}U_2+3U_1<0\right) + \\
 & \mathbb{P}_{H_1}\left(U_1^2+U_2^2>q_2\left| U_1>0,\; \sqrt{3}U_2+3U_1>0\right.\right)\mathbb{P}_{H_1}\left(U_1>0,\; \sqrt{3}U_2+3U_1>0\right)
\end{align*}
Note that $\sqrt{3}U_2-U_1$ is independent from $\sqrt{3}U_2+3U_1$. Thus,
\begin{align*}
\mathbb{P}_{H_1}\left(LR>q(Y,\alpha)\right) = & \mathbb{P}_{H_1}\left(U_1^2>q_1\left| U_1>0\right.\right)\mathbb{P}_{H_1}\left(U_1>0, U_2>0\right) + \\
 & \mathbb{P}_{H_1}\left(\frac{1}{4}\left(\sqrt{3}U_2-U_1\right)^2 >q_1\left| \sqrt{3}U_2-U_1>0\right.\right)\mathbb{P}_{H_1}\left(U_2>0,\; \sqrt{3}U_2+3U_1<0\right) + \\
 & \mathbb{P}_{H_1}\left(U_1^2+U_2^2>q_2\left| U_1>0,\; \sqrt{3}U_2+3U_1>0\right.\right)\mathbb{P}_{H_1}\left(U_1>0,\; \sqrt{3}U_2+3U_1>0\right)
\end{align*}
We state that (Lemma \ref{Lemm:ConditionalIncrease}) the previous conditional probabilities are maximized when $\{\nu_1=0\}, \{\sqrt{3}\nu_2-\nu_1=0\}, \{\nu_1,\nu_2=0\}$ respectively. In other words
\[\mathbb{P}_{H_1}\left(U_1^2>q_1| U_1>0\right) \leq \mathbb{P}_{H_1}\left((U_1-\nu_1)^2>q_1\left.| U_1-\nu_1>0\right.\right) \]
\begin{multline*}
\mathbb{P}_{H_1}\left(\frac{1}{4}\left(\sqrt{3}U_2-U_1\right)^2 >q_1\left| \sqrt{3}U_2-U_1>0\right.\right) \leq \\ \mathbb{P}_{H_1}\left(\frac{1}{4}\left(\sqrt{3}(U_2-\nu_2)-(U_1-\nu_1)\right)^2 >q_1 \left| \sqrt{3}(U_2-\nu_2)-(U_1-\nu_1)>0\right.\right) 
\end{multline*}
\begin{multline*}
\mathbb{P}_{H_1}\left(U_1^2+U_2^2>q_2\left| U_1>0,\; \sqrt{3}U_2+3U_1>0\right.\right) \leq \\ \mathbb{P}_{H_1}\left((U_1-\nu_1)^2+(U_2-\nu_2)^2>q_2\left| U_1-\nu_1>0,\; \sqrt{3}(U_2-\nu_2)+3(U_1-\nu_1)>0\right.\right).
\end{multline*}
Since the norm of a standard Gaussian random variable is independent from its direction, then all the conditional probabilities in the previous display become less than $\alpha$. Thus
\begin{align*}
\mathbb{P}_{H_1}\left(LR>q(Y,\alpha)\right) \leq & \alpha \sum_{i=1}^3\mathbb{P}_{H_1}\left(Y\in S_i\right) \\
 \leq & \alpha(1-\mathbb{P}_{H_1}(Y\in S_0)) \\
 \leq & \frac{5}{6}\alpha
\end{align*}
We could use the remaining $\frac{1}{6}\alpha$ in in order to gain more power. Therefore, it suffices to define the critical value function as
\[\tilde{q}(Y,\alpha) = q\left(Y,\frac{6}{5}\alpha\right)\]
The difference in power between our new method (with $\tilde{q}(Y,\alpha)$) and the classical method which uses mixtures is illustrated in figure (\ref{fig:PowerDiffIsotonic}). Since the position of the means does not matter in this problem but rather the relative positions of the centers with respect to each other, we only vary the differences $\eta_1=\mu_1-\mu_2$ and $\eta_2=\mu_2-\mu_3$. The contours show that in the regions where the adaptive critical value is used, the gain reaches $11\%$ whereas in the regions where the mixture quantile is used, the gain reaches only $7\%$. 
\begin{figure}[ht]
\centering
\includegraphics[scale = 0.65,clip,trim=2cm 2cm 2cm 1.6cm]{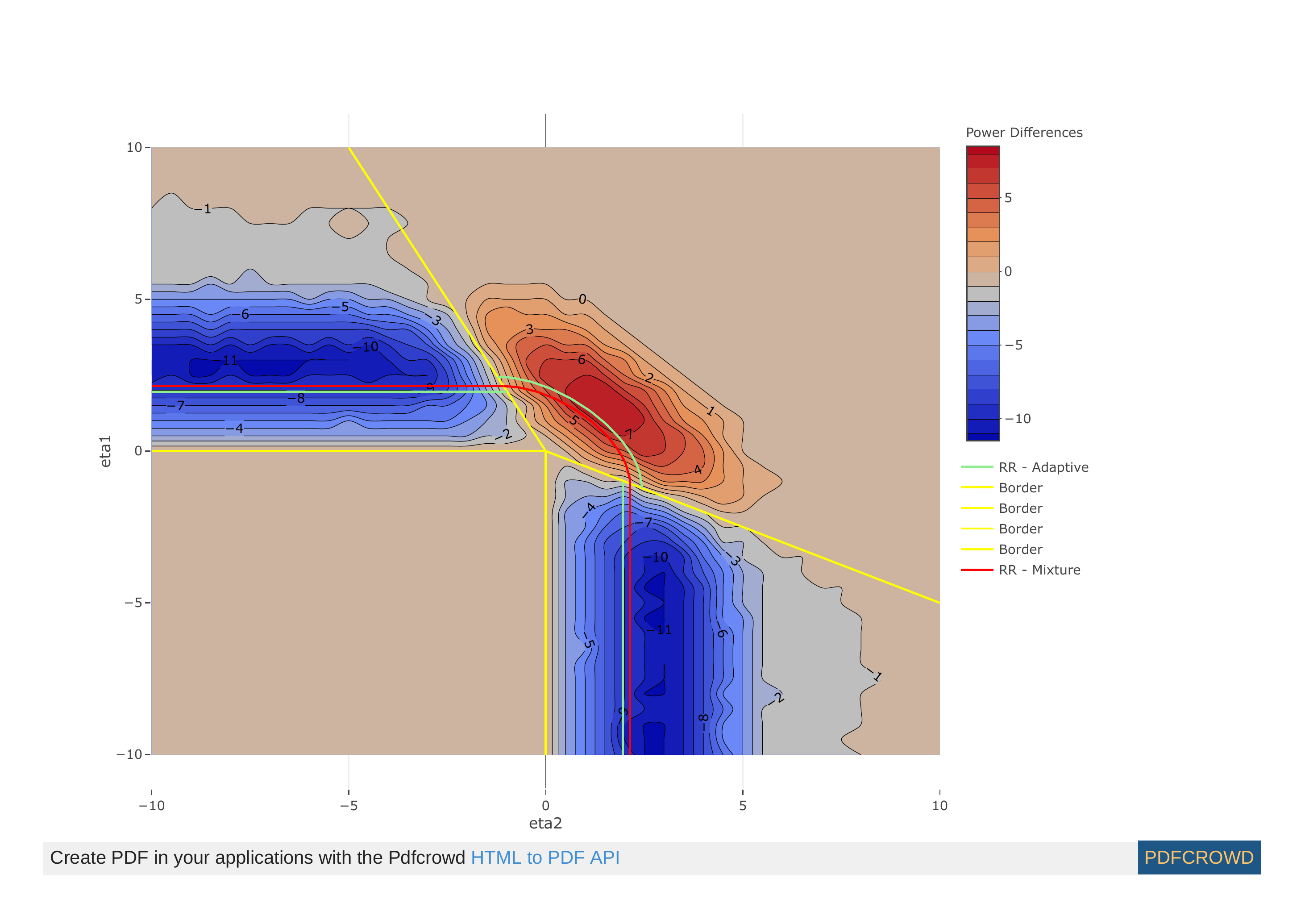}
\caption{Contours for the difference in power $100\times$(Classical$-$New). Figure produced using package \texttt{Plotly}.}
\label{fig:PowerDiffIsotonic}
\end{figure}

%%%%%%%%%%%%%%%%%%%%%%%%%%%%%%%%%%%%%%%%%%%%%%%%%%%%%%%%%%% 
% ====================================
% --------------------
% ====================================
%%%%%%%%%%%%%%%%%%%%%%%%%%%%%%%%%%%%%%%%%%%%%%%%%%%%%%%%%%%
\subsection{The general case}\label{sec:typeBGeneral}
Let $\C\subset\mathbb{R}^p$ be some polyhedral cone. We test $H_1: \theta\in \C$ against $H_2:\theta\notin\C$. The likelihood ratio statistic is given by (\ref{eqn:LRTypeB}) which can also be expressed as (\citet[Proposition 3.4.1]{SilvapulleBook})
\[LR(\C) = \|Y - \C\|_{V} = \|\Pi_V\left(Y|\C^{o}\right)\|^2\]
where $\C^{o}$ is the polar cone corresponding to $\C$. By writing the LR as a projection over the polar cone, we could characterize this projection according to on which face it is projected. Lemma 3.13.5 from \cite{SilvapulleBook} states that when $\C^{o}$ is a polyhedral cone, there exists a collection of faces of $\C^{o}$, say $\{F_1,\cdots,F_K\}$, such that the collection of their relative interiors, $\{ri(F_1),\cdots,ri(F_K)\}$, forms a partitioning of $\C^{o}$. Furthermore, 
\begin{equation}
\|\Pi\left(Y|\C^{o}\right)\|^2 = \sum_{i=1}^K{\ind{\Pi\left(Y|\C^{o}\right)\in ri(F_i)}\|P_i Y\|^2}
\label{eqn:ProjDecompCone}
\end{equation}
where $P_i$ is the projection matrix onto the linear space spanned by $F_i$. The lemma says even more. When $\Pi\left(Y|\C^{o}\right)\in ri(F_i)$, then $\Pi\left(Y|\C^{o}\right) = P_i Y$. Define the random critical value
\begin{equation}
q(\C,Y,\alpha) = \sum_{i=1}^K{\ind{\Pi\left(Y|\C^{o}\right)\in ri(F_i)}q_{\chi^2(r_i)}(1-\alpha)}
\label{eqn:SelectiveQuantile}
\end{equation}
where $r_i = rank(P_i)$. We test $H_1:\theta\in\C$ against all alternatives using the rejection region
\begin{equation}
\left\{\|\Pi_V\left(Y|\C^{o}\right)\|^2>q(\C,Y,\alpha) \right\}
\label{eqn:RRTypeB}
\end{equation}
Our procedure consists in finding which face $F_i$ has the projection of $Y$ on $\C^{o}$, and then calculate the rank $r_i$ of the projection matrix on span$(F_i)$. The LR is then tested against the quantile of one and only one chi-squared random variable with $r_i$ degrees of freedom. The R package \texttt{coneProj} (\citet{coneProj}) calculates the projection over a polyhedral cone (and thus the LR) and provides without any extra cost the rank $r_i$ that we need for the quantile.\\
We proceed now to prove that the test defined through (\ref{eqn:RRTypeB}) has a significance level equal to $\alpha$. The following lemma proves that we have a conditional least favorable distribution according to the position of $Y$ with respect to the polar cone $\C^o$.
%% Conjecture...
\begin{lemma}\label{Lemm:ConditionalIncrease}
Let $Z$ be a Gaussian random vector $\mathcal{N}(0,I_n)$. Let $q>0$, $F\subset\mathbb{R}^n$ be a cone, and let $F^o$ be its polar cone. Then,
\[\mathbb{P}\left(\|Z+\mu\|>q|Z+\mu\in F\right)\leq \mathbb{P}\left(\|Z\|>q|Z\in F\right), \forall \mu\in F^o.\]
\end{lemma}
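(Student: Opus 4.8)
The plan is to write both conditional probabilities as ratios of Gaussian integrals over $F$ and to reduce the comparison to a one-dimensional radial monotonicity statement. Writing $W=Z+\mu\sim\mathcal{N}(\mu,I_n)$, I would express the left-hand side as
\[\frac{\int_{F\cap\{\|w\|>q\}}e^{-\frac12\|w-\mu\|^2}\,dw}{\int_{F}e^{-\frac12\|w-\mu\|^2}\,dw}.\]
Expanding $\|w-\mu\|^2=\|w\|^2-2\langle w,\mu\rangle+\|\mu\|^2$ and cancelling the common factor $e^{-\frac12\|\mu\|^2}$ from numerator and denominator leaves the tilted weight $e^{-\frac12\|w\|^2+\langle w,\mu\rangle}$; the right-hand side is exactly the same ratio with $\mu=0$.

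Since $F$ is a cone, I would then pass to polar coordinates $w=t\omega$ with $t=\|w\|\ge 0$ and $\omega\in F\cap S^{n-1}$, using $dw=t^{n-1}\,dt\,d\sigma(\omega)$ together with the equivalence $w\in F\iff\omega\in F$. Setting $a(\omega)=\langle\omega,\mu\rangle$, the single crucial sign input is that $\mu\in F^o$ forces $a(\omega)\le 0$ for every $\omega\in F$. For each fixed direction I define $\phi(a)=N(a)/D(a)$ with $N(a)=\int_q^\infty t^{n-1}e^{-t^2/2+ta}\,dt$ and $D(a)=\int_0^\infty t^{n-1}e^{-t^2/2+ta}\,dt$; this is the tail $\mathbb{P}_{\rho_a}(T>q)$ of the law $\rho_a$ on $[0,\infty)$ with density proportional to $t^{n-1}e^{-t^2/2+ta}$.

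The core step is to show that $\phi$ is nondecreasing in $a$, so that $a\le 0$ yields $\phi(a)\le\phi(0)$. Differentiating under the integral sign (justified since the Gaussian factor dominates) gives
\[\phi'(a)=\mathbb{E}_{\rho_a}\!\left[T\,\ind{T>q}\right]-\mathbb{E}_{\rho_a}[T]\,\mathbb{P}_{\rho_a}(T>q)=\mathrm{Cov}_{\rho_a}\!\left(T,\ind{T>q}\right),\]
and since $t\mapsto t$ and $t\mapsto\ind{t>q}$ are both nondecreasing, the correlation inequality for two monotone functions of a single scalar variable gives $\phi'(a)\ge 0$. This last fact can be proved directly via the comonotone coupling $\mathrm{Cov}(f(T),g(T))=\tfrac12\mathbb{E}[(f(T)-f(T'))(g(T)-g(T'))]\ge 0$, with $T'$ an independent copy of $T$.

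Finally I would reassemble the two sides. Because $N(a)=\phi(a)D(a)$, the left-hand side equals the $D(a(\omega))$-weighted directional average
\[\frac{\int_{F\cap S^{n-1}}\phi(a(\omega))\,D(a(\omega))\,d\sigma(\omega)}{\int_{F\cap S^{n-1}}D(a(\omega))\,d\sigma(\omega)},\]
while the right-hand side, having $a\equiv 0$, collapses to the constant $\phi(0)$ — consistent with the norm/direction independence cited in Section~\ref{sec:context}. Since each directional tail satisfies $\phi(a(\omega))\le\phi(0)$ and the weights $D(a(\omega))\,d\sigma(\omega)$ are nonnegative, the weighted average is bounded by $\phi(0)$, which is the claim. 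The main obstacle is precisely this coupling: one cannot fix $\omega$ and average naively, because the angular weights $D(a(\omega))$ themselves depend on $\mu$. The resolution is the observation that the tilted conditional probability is an honest nonnegative-weighted average of the directional tails, so the uniform directionwise bound $\phi(a(\omega))\le\phi(0)$ transfers to the average; that directionwise bound in turn rests on the covariance-of-monotone-functions identity, which is the one genuinely technical ingredient.
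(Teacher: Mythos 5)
Your argument is correct, and it organizes the proof by a genuinely different decomposition than the paper's. The paper never disintegrates by direction: it integrates over $S\cap F$ first to obtain the aggregated radial density $g_\mu(t)\propto t^{n-1}e^{-t^2/2}\int_{S\cap F}e^{t\omega^T\mu}\,\nu(d\omega)$, observes that the sign condition $\omega^T\mu\le 0$ makes the likelihood ratio $G_\mu(t)=g_0(t)/g_\mu(t)$ nondecreasing in $t$, and then applies the covariance inequality for comonotone functions once, to the pair $\left(\ind{t>q},G_\mu(t)\right)$ under $g_\mu$, to conclude $\mathbb{E}_\mu[\ind{R>q}]\le\mathbb{E}_0[\ind{R>q}]$. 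You instead fix a direction $\omega$, show that the tail $\phi(a)$ of the tilted radial law $\rho_a$ is nondecreasing in the tilt via $\phi'(a)=\mathrm{Cov}_{\rho_a}\left(T,\ind{T>q}\right)\ge 0$ (the same covariance inequality, applied to the pair $\left(T,\ind{T>q}\right)$ rather than to the likelihood ratio), deduce the uniform directionwise bound $\phi(a(\omega))\le\phi(0)$ from $a(\omega)\le 0$, and recombine through the $D(a(\omega))$-weighted average. Both proofs thus rest on exactly the same two ingredients --- the sign condition coming from $\mu\in F^o$ and the nonnegativity of the covariance of two nondecreasing functions of one scalar variable --- but deploy them in a different order. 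The paper's route delivers likelihood-ratio (hence stochastic) dominance of the radial law $g_0$ over $g_\mu$ in one stroke, so the comparison holds for every nondecreasing test function of the norm without further work; your route proves the stronger statement that the conditional tail is dominated along every ray of $F$ separately, and your closing remark correctly identifies and resolves the only subtlety of the disintegration, namely that the angular weights $D(a(\omega))$ depend on $\mu$ but are harmless because the directionwise bound is uniform over $\omega\in F\cap S^{n-1}$.
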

\begin{proof}
We rewrite the conditional probability in polar coordinates. Let $S=\{\omega\in\mathbb{R}^n| \; \|\omega\|=1\}$ be the unit sphere in $\mathbb{R}^n$. Denote $\nu$ the surface measure on $S$. We have
\[\mathbb{P}\left(\|Z+\mu\|>q|Z+\mu\in F\right) = \frac{\int_{r=q}^{\infty}\int_{\omega\in S \cap F}t^{n-1}e^{-\frac{1}{2}\|tw-\mu\|^2}\nu(d\omega)dt}{\int_{r=0}^{\infty}\int_{\omega\in S \cap F}s^{n-1}e^{-\frac{1}{2}\|sw-\mu\|^2}\nu(d\omega)ds} \]
This means that the conditional density of $\|Z+\mu\|$ provided that $\{Z+\mu\in F\}$ is given by
\[g_{\mu}(t) = \frac{t^{n-1}e^{-\frac{1}{2}t^2}\int_{\omega\in S \cap F}e^{tw^T\mu}\nu(d\omega)}{\int_{r=0}^{\infty}\int_{\omega\in S \cap F}s^{n-1}e^{-\frac{1}{2}s^2+sw^T\mu}\nu(d\omega)ds}.\]
Denote $A(\mu)$ the normalization constant in the previous display, that is
\[A(\mu) = \int_{r=0}^{\infty}\int_{\omega\in S \cap F}s^{n-1}e^{-\frac{1}{2}s^2+sw^T\mu}\nu(d\omega)ds.\]
Define
\[G_{\mu}(t) = \frac{g_0(t)}{g_{\mu}(t)} = \frac{A(\mu)}{A(0)}\frac{\int_{\omega\in S \cap F}\nu(d\omega)}{\int_{\omega\in S \cap F}e^{t\omega^T\mu}\nu(d\omega)}.\]
Since $\mu\in F^o$, then 
\[\forall\omega\in S \cap F, \quad \omega^T\mu \leq 0.\]
Thus, function $t\mapsto G_{\mu}(t)$ is nondecreasing over $(0,\infty)$ whatever the value of $\mu\in F^o$.\\
Let $h$ be some measurable nondecreasing function defined on $\mathbb{R}_+$. Note that for any couple of nonnegative real numbers $(r_1,r_2)$, since both $G_{\mu}(t)$ and $h(t)$ are nondecreasing functions in $t$ over $(0,\infty)$, we have
\begin{equation}
(h(r_1)-h(r_2))(G_{\mu}(r_1)-G_{\mu}(r_2))\geq 0
\label{eqn:IncreaseDiff}
\end{equation}
Let $R_1$ and $R_2$ be two i.i.d. random variables with a common density defined on $(0,\infty)$. We have, due to (\ref{eqn:IncreaseDiff}),
\[\mathbb{E}\left[\left(h(R_1)-h(R_2)\right)\left(G(R_1)-G(R_2)\right)\right]\geq 0.\]
We deduce that
\begin{equation}
\mathbb{E}\left[h(R_1)G_{\mu}(R_1)\right]\geq \mathbb{E}\left[h(R_1)\right]\mathbb{E}\left[G_{\mu}(R_1)\right].
\label{eqn:PositiveCov}
\end{equation}
Assume now that $R_1$ has the density $g_{\mu}$, and denote $\mathbb{E}_{\mu}$ (resp. $\mathbb{E}_0$) the expectation under $g_{\mu}$ (resp. $g_0$). We have now
\begin{align*}
\mathbb{E}_{\mu}\left[h(R_1)G_{\mu}(R_1)\right] & =  \mathbb{E}_{0}\left[h(R_1)\right] \\
\mathbb{E}_{\mu}\left[G_{\mu}(R)\right] & =  1.
\end{align*}
The second line in the previous display comes from the fact that $g_0$ is a density over $(0,\infty)$. Hence, due to (\ref{eqn:PositiveCov}), we may write
\begin{equation}
\mathbb{E}_{\mu}\left[h(R)\right] \leq  \mathbb{E}_{0}\left[h(R)\right].
\label{eqn:IncreaseExpect}
\end{equation}
Set $h(t) = \ind{t>q}$ (the indicator function of the set $(q,\infty)$). Function $h$ is nondecreasing over $(0,\infty)$, therefore we could apply (\ref{eqn:IncreaseExpect}) on it and get 
\[\frac{\int_{r=q}^{\infty}\int_{\omega\in S \cap F}t^{n-1}e^{-\frac{1}{2}\|tw^T\mu\|^2}\nu(d\omega)dt}{\int_{r=0}^{\infty}\int_{\omega\in S \cap F}s^{n-1}e^{-\frac{1}{2}\|sw^T\mu\|^2}\nu(d\omega)ds} \leq \frac{\int_{r=q}^{\infty}\int_{\omega\in S \cap F}t^{n-1}\nu(d\omega)dt}{\int_{r=0}^{\infty}\int_{\omega\in S \cap F}s^{n-1}\nu(d\omega)ds}. \]
This is exactly what the lemma claims.
\end{proof}
The previous lemma is the corner stone to proving that our new test has a correct $\alpha$ level. We present now our main result.
%%Proposition...
\begin{proposition}[type B]\label{prop:typeBGeneral}
Let $Y\sim \mathcal{N}_p(\mu,V)$ and $\C$ be a polyhedral cone. A valid test at level $\alpha$ for $H_1:\theta\in\C$ against $\theta\notin\C$ is given by the rejection region
\begin{equation}
\left\{LR(\C,V)>q(\C,Y,\alpha)\right\}
\label{eqn:SelectiveTest}
\end{equation}
where $q(\C,Y,\alpha)$ is given by (\ref{eqn:SelectiveQuantile}). Moreover,
\[\mathbb{P}_{H_1}\left(LR(\C)>q(\C,Y,\alpha)\right)\leq \left(1-\mathbb{P}_{H_1}\left(Y\in\C\right)\right)\alpha\]
\end{proposition}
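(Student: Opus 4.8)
The plan is to generalize the computation carried out in the worked examples: decompose the rejection probability according to which face of $\C^o$ receives the projection of $Y$, bound each conditional term by $\alpha$ using Lemma \ref{Lemm:ConditionalIncrease} together with the direction-independence of a standard Gaussian (Lemma 3.13.1 of \cite{SilvapulleBook}), and then sum. First I would reduce to the case $V=I_p$: writing $\tilde Y = V^{-1/2}Y\sim\mathcal{N}(V^{-1/2}\mu, I_p)$ turns $\|\cdot\|_V$ into the Euclidean norm and $\Pi_V(\cdot|\C)$ into the Euclidean projection onto the (still polyhedral) cone $V^{-1/2}\C$, so without loss of generality $Y\sim\mathcal{N}(\mu,I_p)$ and all projections are Euclidean.

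Using the partition $\{ri(F_1),\dots,ri(F_K)\}$ of $\C^o$ and the decomposition (\ref{eqn:ProjDecompCone}), I would write, with $D_i=\{Y:\Pi(Y|\C^o)\in ri(F_i)\}$,
\[
\mathbb{P}_{H_1}(LR>q(\C,Y,\alpha)) = \sum_{i=1}^K \mathbb{P}_{H_1}\!\left(\|P_iY\|^2 > q_{\chi^2(r_i)}(1-\alpha)\,\middle|\, Y\in D_i\right)\mathbb{P}_{H_1}(Y\in D_i).
\]
The face with $r_i=0$ is the origin, for which $D_i=\{Y\in\C\}$ (since $\Pi(Y|\C^o)=0$ is equivalent to $Y\in\C$) and on which $LR=0$ never exceeds $q_{\chi^2(0)}=0$; this term drops out. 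The target is therefore to show that each remaining conditional factor is at most $\alpha$.

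The crux is a geometric reduction to the span $L_i=\mathrm{span}(F_i)$, of dimension $r_i$. On $ri(F_i)$ the normal cone of $\C^o$ is a fixed cone $N_i$ contained in $L_i^\perp$, so $D_i$ splits as the orthogonal direct sum of $ri(F_i)\subset L_i$ and $N_i\subset L_i^\perp$; equivalently, $Y\in D_i$ iff $P_iY\in ri(F_i)$ and $(I-P_i)Y\in N_i$. Since $P_iY$ and $(I-P_i)Y$ are independent under $\mathcal{N}(\mu,I_p)$ and the statistic $\|P_iY\|^2$ depends only on $P_iY$, the conditioning on $(I-P_i)Y\in N_i$ drops, leaving $\mathbb{P}_{H_1}(\|P_iY\|^2>q_{\chi^2(r_i)}(1-\alpha)\mid P_iY\in ri(F_i))$, an event living entirely in $L_i\cong\mathbb{R}^{r_i}$ with $P_iY\sim\mathcal{N}(P_i\mu,I_{r_i})$.

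It then remains to verify the hypothesis of Lemma \ref{Lemm:ConditionalIncrease}, namely $P_i\mu\in F_i^o$ (the polar taken inside $L_i$): for $x\in F_i\subseteq\C^o$ and $\mu\in\C$ one has $x^T\mu\le 0$, whence $(P_i\mu)^Tx=\mu^TP_ix=\mu^Tx\le 0$. Applying Lemma \ref{Lemm:ConditionalIncrease} bounds the conditional factor by its value at $P_i\mu=0$, and there the direction-independence of the standard Gaussian makes the conditioning on the direction $P_iY/\|P_iY\|\in S\cap F_i$ irrelevant, so the factor equals $\mathbb{P}(\chi^2(r_i)>q_{\chi^2(r_i)}(1-\alpha))=\alpha$. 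Summing over the faces with $r_i\ge 1$, whose regions $D_i$ partition $\{Y\notin\C\}$ up to a null set, yields $\mathbb{P}_{H_1}(LR>q(\C,Y,\alpha))\le\alpha\,(1-\mathbb{P}_{H_1}(Y\in\C))$, which is the stated bound; level-$\alpha$ validity follows at once since $1-\mathbb{P}_{H_1}(Y\in\C)\le 1$. I expect the main obstacle to be the geometric step: justifying that the normal cone of $\C^o$ is constant and orthogonal to $L_i$ on $ri(F_i)$, so that $D_i$ factorizes orthogonally and the independence argument closes the conditioning; the verification $P_i\mu\in F_i^o$ and the two probabilistic lemmas are then routine.
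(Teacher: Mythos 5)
Your proposal is correct and follows essentially the same route as the paper's own proof: the same face-wise decomposition of $\C^o$ via (\ref{eqn:ProjDecompCone}), the same use of the equivalence $\{\Pi(Y|\C^o)\in ri(F_i)\}=\{P_iY\in ri(F_i),\,(I-P_i)Y\in F_i^{\perp}\cap\C^o\}$ plus independence of $P_iY$ and $(I-P_i)Y$ to drop the orthogonal conditioning, the same verification that $P_i\mu$ lies in the polar of $F_i$, and the same application of Lemma \ref{Lemm:ConditionalIncrease} followed by the direction-independence of a standard Gaussian to land each conditional term exactly at $\alpha$. The only cosmetic difference is that you reduce to $V=I_p$ at the outset via $V^{-1/2}$, whereas the paper performs the same whitening (via $V^{-1}=A^TA$) at the end; the geometric step you flag as the main obstacle is exactly what the paper disposes of by citing Lemma 3.13.2 of \cite{SilvapulleBook}.
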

\begin{proof}
We start with the case when $V=I_p$. Due to (\ref{eqn:ProjDecompCone}), we may write
\begin{align*}
\mathbb{P}_{H}\left(LR(\C)>q(\C,Y,\alpha)\right) =& \sum_{j=1}^K\mathbb{P}_{H_1}\left(\left\{\|Y-\C\|^2>q_{r_j}\right\}\cap \left\{\Pi\left(Y|\C^{o}\right)\in ri(F_j)\right\}\right) \\
= & \sum_{j=1}^K\mathbb{P}_{H_1}\left(\|Y-\C\|^2>q_{r_j} \left| \Pi\left(Y|\C^{o}\right)\in ri(F_j)\right.\right)\mathbb{P}_{H_1}\left(\Pi\left(Y|\C^{o}\right)\in ri(F_j)\right)
\end{align*}
We treat each of the conditional probabilities separately. Lemma 3.13.2 from \cite{SilvapulleBook} states that the event
\[\left\{\Pi\left(Y|\C^{o}\right)\in ri(F_j)\right\}\]
is equivalent to the event
\[\{P_jY\in ri(F_j),\; (I-P_j)Y\in F_j^{\bot}\cap \C^o\}\]
Recall that $\|Y-\C\|^2 = \Pi\left(Y|\C^{o}\right)$. Moreover, when $\Pi\left(Y|\C^{o}\right)\in ri(F_j)$, then $\Pi\left(Y|\C^{o}\right) = P_j Y$, thus
\begin{align*}
\mathbb{P}_H\left(\|Y-\C\|^2>q_{r_j}\left| \Pi\left(Y|\C^{o}\right)\in ri(F_j)\right.\right) = &
\mathbb{P}_H\left(\|P_jY\|^2>q_{r_j}\left| \Pi\left(Y|\C^{o}\right)\in ri(F_j)\right.\right) \\
= & \mathbb{P}_H\left(\|P_jY\|^2>q_{r_j}\left| P_jY\in ri(F_j), (I-P_j)Y\in F_j^{\bot}\cap \C^o\right.\right)
\end{align*}
Because $P_j$ is a projection matrix, then the random variables $P_jY$ and $(I-P_j)Y$ are independent. Thus,
\[
\mathbb{P}_{H_1}\left(\|Y-\C\|^2>q_{r_j}\left| \Pi\left(Y|\C^{o}\right)\in ri(F_j)\right.\right) = \mathbb{P}_{H_1}\left(\|P_jY\|^2>q_{r_j}\left| P_jY\in ri(F_j)\right.\right) \]
We apply Lemma \ref{Lemm:ConditionalIncrease} on $Z=P_jY-P_j\mu$, $\C=P_j\C$. Note that for all $\mu\in\C$ and $y\in ri(F_j)$, we have
\begin{align*}
y^TP_j\mu = & (P_j^Ty)^T\mu\\
 = & (P_jy)^T\mu \\
= & y^T\mu \\
 \leq & 0
\end{align*}
\begin{figure}[ht]
\centering
\includegraphics[scale=0.5]{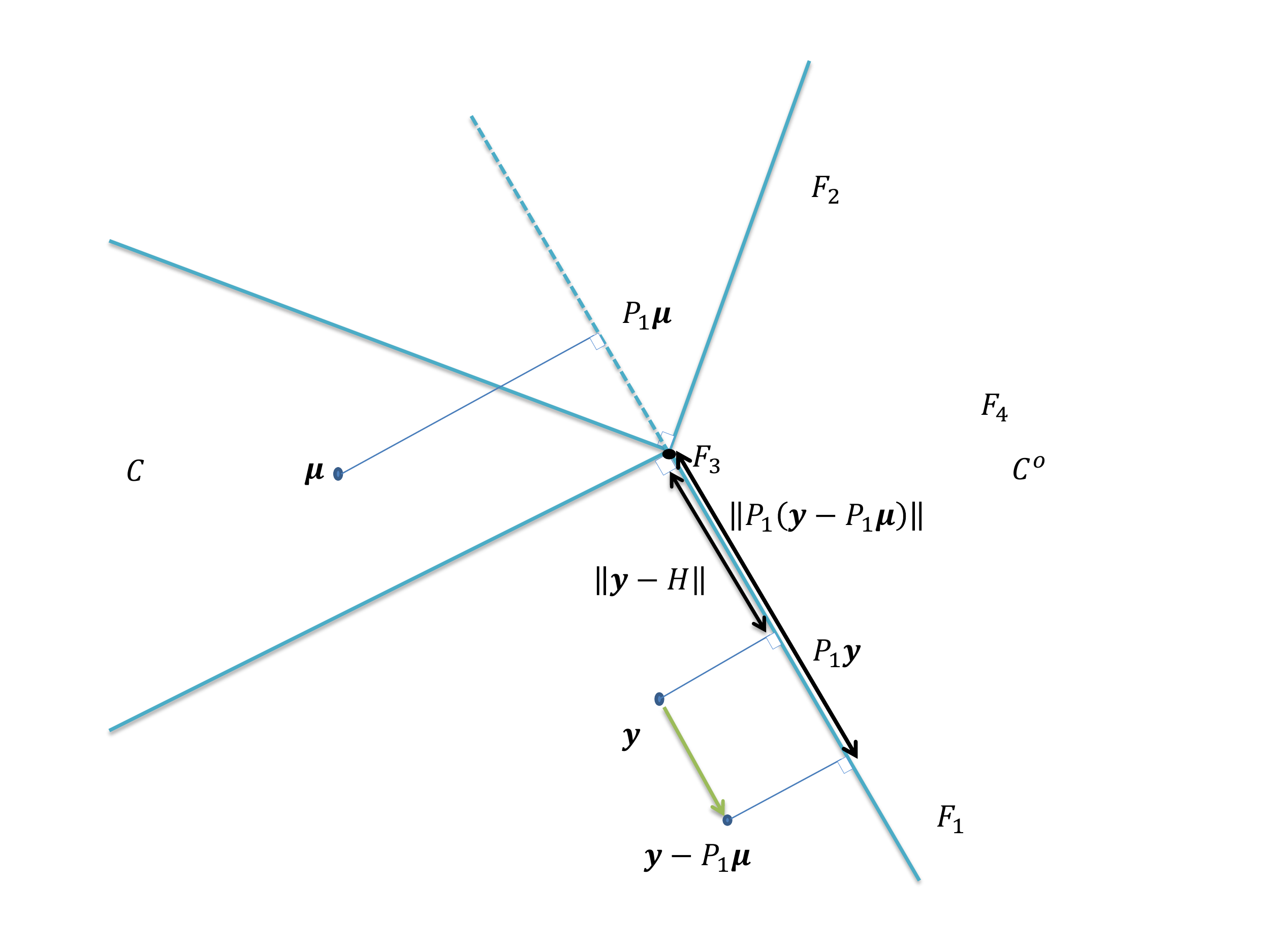}
\caption{The conditional least favorable distribution. The polar cone has 4 faces, F3 is the intersection point of $F_1$ and $F_2$, whereas $F4=\C^o$.}
\label{fig:LeastFav}
\end{figure}
The second line comes from the fact that $P_j$ is a projection matrix so that it is symmetric, and the third lines is because $P_j$ is the projection matrix onto $span(F_j)$ and $y\in ri(F_j)$. Thus, for any $\mu$ in $\C$, $P_j\mu$ is in the polar cone of $F_j$. See figure (\ref{fig:LeastFav}) for an illustration. Thus,
\[\mathbb{P}_{H_1}\left(\|P_jY\|^2>q_{r_j}\left| P_jY\in ri(F_j)\right.\right)\leq \mathbb{P}_{H_1}\left(\|P_jY-P_j\mu\|^2>q_{r_j}\left| P_jY-P_j\mu\in ri(F_j)\right.\right)\]
Now since $Y-\mu\sim\mathcal{N}(0,I_p)$, then according to Lemma 3.13.3 from \cite{SilvapulleBook}, the distribution of $\|P_j(Y-\mu)\|^2=\|P_jY-P_j\mu\|^2$ conditionally on $P_jY-P_j\mu\in\C$ is $\chi^2(r_j)$. Thus,
\begin{align*}
\mathbb{P}_{H_1}\left(\|P_jY\|^2>q_{r_j}\left| P_jY\in ri(F_j)\right.\right)\leq & \mathbb{P}_{H_1}\left(\|P_jY-P_j\mu\|^2>q_{r_j}\left| P_jY-P_j\mu\in ri(F_j)\right.\right) \\
\leq & \alpha.
\end{align*}
Finally,
\begin{align*}
\mathbb{P}_{H}\left(LR(\C)>q(\C,Y,\alpha)\right) =& \sum_{j=1}^K\mathbb{P}_{H_1}\left( \|\Pi_{\C^o}Y\|^2>q_{r_j}\left| \Pi\left(Y|\C^{o}\right)\in ri(F_j)\right.\right)\mathbb{P}_{H_1}\left(\Pi\left(Y|\C^{o}\right)\in ri(F_j)\right) \\
\leq & \alpha\sum_{j=1}^K\mathbb{P}_{H_1}\left(\Pi\left(Y|\C^{o}\right)\in ri(F_j)\right)\\
\leq & \left(1-\mathbb{P}_{H_1}\left(\Pi\left(Y|\C^{o}\right)\in ri(F_0)\right)\right)\alpha\\
\leq & \left(1-\mathbb{P}_{H_1}\left(Y\in\C\right)\right)\alpha \\
< & \alpha
\end{align*}
We have proved that (\ref{eqn:SelectiveTest}) is a valid test at level $\alpha$ when $V=I_p$. We move now to the general case for any symmetric positive definite matrix $V$. Let $V^{-1}=A^TA$, and $Z=AY$, then $Z$ is distributed as $\mathcal{N}_p(A\mu,I)$ and
\[LR = \|Z-A\C\|^2.\]
We use the same arguments above on $Z$ and the polyhedral cone $A\C$ instead of $Y$ and $\C$ respectively. Note that the face of $A\C$ takes the form of $AF$ for some face of $\C$ and that dim$(span(F_i))=$ dim$(span(AF_i))$ (\citet[p. 129]{SilvapulleBook}). Therefore, since the degrees of freedom in $q(A\C,Y,\alpha)$ depend only on the rank of the projection matrix onto the linear space spanned by $AF$, then they stay the same whether they are calculated for $Y$ or for $Z$ so that $q(\C,Y,\alpha)=q(A\C,Z,\alpha)$. Thus,
\begin{align*}
\mathbb{P}_{H}\left(LR(V,\C)>q(\C,Y,\alpha)\right) = & \mathbb{P}_{H}\left(LR(A\C)>q(A\C,X,\alpha)\right) \\
 \leq & \left(1-\mathbb{P}_{H_1}\left(Z\in A\C\right)\right)\alpha\\
< & \alpha
\end{align*}
\end{proof}
Whenever it is possible to calculate the term $\mathbb{P}_{H_1}\left(Y\in \C\right)$ or at least provide an upper bound, then one could use it to adjust the critical level for the chi-squares quantiles and gain more power. Two examples are discussed hereafter. 
\begin{remark}
In the literature, the test (\ref{eqn:RRTypeB}) in the case of an isotonic hypothesis was named as a conditional likelihood ratio test (\citet{WollanDykstra,IversonHarp}). We do not really agree on that name. The fact that the critical value is random is not classical, however we could look at the test differently. The classical way of constrained LRT tests the LR against a quantile of a mixture, say $c(\alpha)$. The rejection region is
\[\{\left\{\|\Pi_V\left(Y|\C^{o}\right)\|^2 - c(\alpha)>0\right\}.\]
The rejection region (\ref{eqn:RRTypeB}) could be rewritten as
\[\left\{\|\Pi_V\left(Y|\C^{o}\right)\|^2-q(\C,Y,\alpha)>0 \right\}\]
In other words, our new approach appears as if the test statistic was changed from $LR(\C)-c(\alpha)$ into $LR(\C)-q(\C,Y,\alpha)$. The "critical value" for both tests is the same and is 0.
\end{remark}
%%%%%%%%%%%%%%%%%%%%%%%%%%%%%%%%%%%%%%%%%%%%%%%%%%%%%%%%%%%
% ====================================
% --------------------
% ====================================
%%%%%%%%%%%%%%%%%%%%%%%%%%%%%%%%%%%%%%%%%%%%%%%%%%%%%%%%%%%
\section{Examples}\label{sec:Examples}
\subsection{The special case of orthant hypotheses revisited}
Let $y_1,\cdots,y_p$ be $n$ i.i.d. realizations from the Gaussian distributions $\mathcal{N}(\mu_i,1)$ for $i=1,\cdots,p$. When the null hypothesis is $H_1:\mu_1\leq 0, \cdots,\mu_p\leq 0$, then the polyhedral $\C$ from Proposition \ref{prop:typeBGeneral} is $\{y\in\mathbb{R}^p: -I_py\geq 0\}$. The LR is given by
\[LR = \sum_{i=1}^n{y_i^2\ind{y_i\geq 0}}.\]
The quantile function $q(Y,\alpha)$ is given by
\[q(Y,\alpha) = q_{\chi^2(N)}, \qquad N = \sum_{i=1}^n{\ind{Y_i\geq 0}}.\]
We could still gain a little more power by considering the immediate lower bound on the adjustment of the critical level. 
\begin{align*}
\mathbb{P}_{H_1}\left(Y\in \C\right) &\geq \mathbb{P}_{0}\left(Y\in \C\right) \\
 & \geq 2^{-p}
\end{align*}
Thus, in order to test the orthant $H_1$ at level $\alpha$, we use the quantile $q(Y,\bar{\alpha})$ defined by
\[q(Y,\bar{\alpha}) = q_{\chi^2(N)}\left(1-\frac{2^p}{2^p-1}\alpha\right)\]
For small values of $p$, the gain from this adjustment becomes important where it almost vanishes as the dimension $p$ grows. 
%%%%%%%%%%%%%%%%%%%%%%%%%%%%%%%%%%%%%%%%%%%%%%%%%%%%%%%%%%%
% ====================================
% --------------------
% ====================================
%%%%%%%%%%%%%%%%%%%%%%%%%%%%%%%%%%%%%%%%%%%%%%%%%%%%%%%%%%%
\subsection{The special case of isotonic hypotheses revisited: Maybe study a hypothesis with equalities!}
When the null hypothesis is $H_1:\mu_1\leq\cdots\leq\mu_p$, then the polyhedral $\C$ from Proposition \ref{prop:typeBGeneral} is $\{y\in\mathbb{R}^p: Ay\geq 0\}$ where the matrix $A$ has all its elements zero except for $A_{i,i} = -1$ and $A_{i,i+1}=1$ for all $i=1,\cdots,p-1$. The likelihood ratio can be calculated easily using the PAVA (\citet{BartholomewPAVA, vanEeden}). Function \texttt{isoreg} in the statistical program R does the job. If the number of levels (distinct values) in the result of the PAVA is $l$, then the LR is tested against the quantile of a $\chi^2(n-l)$ at order $1-\alpha$. We could adjust the order of the quantile in order to gain more power. We have the following immediate lower bound on the adjustment term (\citet[Corollary 2.6]{Robertson78})
\begin{align*}
\mathbb{P}_{H_1}\left(Y\in \C\right) &\geq  \mathbb{P}_{0}\left(Y\in \C\right) \\
 & \geq \frac{1}{p!}
\end{align*}
Thus, we can test the LR against the quantile of $\chi^2(n-l)$ at order $1-\frac{p!}{p!-1}\alpha$. \cite{WollanDykstra} proposed in the case of the isotonic hypothesis to give an estimate instead of an lower bound for the adjustment term $\mathbb{P}_{H_1}\left(Y\in \C\right)$ so that we gain even more power. They state however, that it could lead sometimes to increase the significance level of the test more than $\alpha$.\\
%%%%%%%%%%%%%%%%%%%
%\subsection{The case of $p=2$}
%When $p=2$, test $\mu_1\leq\mu_2$ against all alternatives. Define the critical random value as
%\[ q(Y,\alpha) = 
 %\left\{\begin{array}{rcl}
%0 & \text{if} & Y_1\leq Y_2 \\
%q_1 & \text{if} & Y_1\geq Y_2
%\end{array}\right.
%\]
%We have
%\begin{align*}
%\mathbb{P}_{H_1}\left(LR>q(Y,\alpha)\right) = & \mathbb{P}_{H_1}\left(\frac{(Y_1-Y_2)^2}{2}>q_1|Y_1\geq Y_2\right)\mathbb{P}_{H_1}\left(Y_1\geq Y_2\right) \\
 %= & \mathbb{P}_{H_1}\left(Y^2>q_1|Y>0\right)\mathbb{P}_{H_1}\left(Y_1\geq Y_2\right)
%\end{align*}
%with $Y = (Y_1-Y_2)/\sqrt{2}$. In the previous paragraph we have shown that the conditional probability is smaller than $\alpha$. Moreover $\mathbb{P}_{H_1}\left(Y_1\geq Y_2\right)$ is maximized when $\mu_1=\mu_2$ inside $H_1$. Thus
%\[\mathbb{P}_{H_1}\left(LR>q(Y,\alpha)\right)\leq \frac{\alpha}{2} < \alpha.\]
%%%%%%%%%%%%%%

%%%%%%%%%%%%%%%%%%%%%%%%%%%%%%%%%%%%%%%%%%%%%%%%%%%%%%%%%%%
% ====================================
% --------------------
% ====================================
%%%%%%%%%%%%%%%%%%%%%%%%%%%%%%%%%%%%%%%%%%%%%%%%%%%%%%%%%%%

\section{The case of unknown variances}\label{sec:UnKnownVar}
Let $Y\sim\mathcal{N}(\mu,V)$. Assume that $V=\sigma^2\Sigma$ for some known matrix $\Sigma$ and unknown $\sigma^2$. Following \cite{Kudo}, assume that we dispose of an estimator for $\sigma^2$, say $\hat{\sigma}^2$, such that $\hat{\sigma}^2/\sigma^2$ is distributed independently as $\chi^2(m)$. In this section, $c_{j,m}$ is the quantile of an F-distributed random variable with degrees of freedom $j$ and $m$. We test $H_1:\mu\in\C$ against all alternatives. Let $X\sim\mathcal{N}(\mu/\sigma^2,\Sigma)$. Note that
\begin{align*}
LR(Y) = & \min_{\theta\in\C} (Y-\theta)^T\hat{V}^{-1}(Y-\theta)\\
 = & \min_{\theta\in\C} \frac{1}{\hat{\sigma}^2}(\sigma X-\theta)^T\Sigma^{-1}(\sigma X-\theta)\\
 = & \frac{\sigma^2}{\hat{\sigma}^2}\min_{\theta\in\C} (X-\theta)^T\Sigma^{-1}(X-\theta)\\
= & \frac{\sigma^2}{\hat{\sigma}^2} LR(X)
\end{align*}
The third line comes from the fact that $\C$ is a cone and that $\sigma^2>0$ so that if $\mu\in\C$ then $\sigma^2\mu\in\C$. Define the random critical function as follows
\[q(\C,Y,\alpha) = \sum_{i=1}^K{\ind{\Pi_{\hat{V}}\left(Y|\C^{o}\right)\in ri(F_i)}c_{r_i,m}}(1-\alpha)\]
where the $F_j$'s are, as in Section \ref{sec:typeBGeneral}, the faces of the polar of the polyhedral cone $\C^o$ and they form a partitioning of it.\\
We want to show that
\[\mathbb{P}_{H_1}(LR(Y))>q(\C,Y,\alpha) \leq \alpha.\]
\begin{proof}
We start with the case when $\Sigma=I_p$. Due to (\ref{eqn:ProjDecompCone}), we may write
\begin{align*}
\mathbb{P}_{H}\left(LR(\C)>q(\C,Y,\alpha)\right) =& \sum_{j=1}^K\mathbb{P}_{H_1}\left(\frac{\sigma^2}{\hat{\sigma}^2}\|X-\C\|^2>c_{r_j,m} \left| \Pi\left(X|\C^{o}\right)\in ri(F_j)\right.\right)\mathbb{P}_{H_1}\left(\Pi\left(X|\C^{o}\right)\in ri(F_j)\right)
\end{align*}
Using Lemma 3.13.2 from \cite{SilvapulleBook} we may write
\begin{align*}
\mathbb{P}_H\left(\frac{\sigma^2}{\hat{\sigma}^2}\|X-\C\|^2>c_{r_j,m}\left| \Pi\left(X|\C^{o}\right)\in ri(F_j)\right.\right) = &
\mathbb{P}_H\left(\frac{\sigma^2}{\hat{\sigma}^2}\|P_jX\|^2>c_{r_j,m}\left| \Pi\left(X|\C^{o}\right)\in ri(F_j)\right.\right) \\
= & \mathbb{P}_H\left(\frac{\sigma^2}{\hat{\sigma}^2}\|P_jX\|^2>c_{r_j,m}\left| P_jX\in ri(F_j), (I-P_j)X\in F_j^{\bot}\cap \C^o\right.\right)
\end{align*}
Because $P_j$ is a projection matrix, then the random variables $P_jX$ and $(I-P_j)X$ are independent. Besides, $(I-P_j)X$ is independent from $\hat{\sigma}^2$. Thus,
\begin{align*}
\mathbb{P}_{H_1}\left(\frac{\sigma^2}{\hat{\sigma}^2}\|X-\C\|^2>c_{r_j,m}\left| \Pi\left(X|\C^{o}\right)\in ri(F_j)\right.\right) =& \mathbb{P}_{H_1}\left(\frac{\sigma^2}{\hat{\sigma}^2}\|P_jX\|^2>c_{r_j,m}\left| P_jX\in ri(F_j)\right.\right) \\
 = & \int_{s>0}\mathbb{P}_{H_1}\left(s\|P_jX\|^2>c_{r_j,m}\left| P_jX\in ri(F_j)\right.\right)f_{\chi^2(m)}(s)ds
\end{align*}
We apply Lemma \ref{Lemm:ConditionalIncrease} on $Z=P_jX-\frac{1}{\sigma^2}P_j\mu$, $\C=P_j\C$ and $q=c_{r_j,m}/s$. Note that for all $\mu\in\C$ and $y\in ri(F_j)$, we have $y^TP_j\mu\leq 0$. Thus, for any $\mu$ in $\C$, $P_j\mu$ is in the polar cone of $F_j$ and so does $\frac{1}{\sigma^2}P_j\mu$ because $F_j$ is a polyhedral cone. Lemma \ref{Lemm:ConditionalIncrease} states then 
\[\mathbb{P}_{H_1}\left(s\|P_jX\|^2>c_{r_j,m}\left| P_jX\in ri(F_j)\right.\right)\leq \mathbb{P}_{H_1}\left(s\left\|P_jX-\frac{1}{\sigma^2}P_j\mu\right\|^2>c_{r_j,m}\left| P_jX-\frac{1}{\sigma^2}P_j\mu\in ri(F_j)\right.\right)\]
Integrating both sides with respect to $f_{\chi^2(m)}(s)$, we get
\[\mathbb{P}_{H_1}\left(\frac{\sigma^2}{\hat{\sigma}^2}\|P_jX\|^2>c_{r_j,m}\left| P_jX\in ri(F_j)\right.\right)\leq \mathbb{P}_{H_1}\left(\frac{\sigma^2}{\hat{\sigma}^2}\left\|P_jX-\frac{1}{\sigma^2}P_j\mu\right\|^2>c_{r_j,m}\left| P_jX-\frac{1}{\sigma^2}P_j\mu\in ri(F_j)\right.\right)\]
Now since $X-\frac{1}{\sigma^2}\mu\sim\mathcal{N}(0,I_p)$, then according to Lemma 3.13.3 from \cite{SilvapulleBook}, the distribution of $\|P_j(X-\frac{1}{\sigma^2}\mu)\|^2=\|P_jX-\frac{1}{\sigma^2}P_j\mu\|^2$ conditionally on $P_jX-\frac{1}{\sigma^2}P_j\mu\in\C$ is $\chi^2(r_j)$. Thus, $\frac{\sigma^2}{\hat{\sigma}^2}\|P_jX-\frac{1}{\sigma^2}P_j\mu\|^2$ conditionally on $P_jX-\frac{1}{\sigma^2}P_j\mu\in\C$ has the F-distribution with degrees of freedom $m$ and $r_j$. Hence,
\begin{align*}
\mathbb{P}_{H_1}\left(\frac{\sigma^2}{\hat{\sigma}^2}\|P_jX\|^2>c_{r_j,m}\left| P_jX\in ri(F_j)\right.\right)\leq & \mathbb{P}_{H_1}\left(\frac{\sigma^2}{\hat{\sigma}^2}\|P_jX-\frac{1}{\sigma^2}P_j\mu\|^2>c_{r_j,m}\left| P_jX-\frac{1}{\sigma^2}P_j\mu\in ri(F_j)\right.\right) \\
\leq & \alpha.
\end{align*}
Finally,
\begin{align*}
\mathbb{P}_{H}\left(LR(Y)>q(\C,Y,\alpha)\right) =& \sum_{j=1}^K\mathbb{P}_{H_1}\left(\frac{\sigma^2}{\hat{\sigma}^2}\|\Pi(X|\C^o)\|^2>c_{r_j,m}\left| \Pi\left(X|\C^{o}\right)\in ri(F_j)\right.\right)\mathbb{P}_{H_1}\left(\Pi\left(X|\C^{o}\right)\in ri(F_j)\right) \\
\leq & \alpha\sum_{j=1}^K\mathbb{P}_{H_1}\left(\Pi\left(X|\C^{o}\right)\in ri(F_j)\right)\\
\leq & \left(1-\mathbb{P}_{H_1}\left(\Pi\left(X|\C^{o}\right)\in ri(F_0)\right)\right)\alpha\\
\leq & \left(1-\mathbb{P}_{H_1}\left(Y\in\C\right)\right)\alpha \\
< & \alpha
\end{align*}
We have proved that (\ref{eqn:SelectiveTest}) is a valid test at level $\alpha$ when $\Sigma=I_p$. The case when $\Sigma$ is not the identity matrix is treated similarly to the proof of Proposition \ref{prop:typeBGeneral} by considering the decomposition $\Sigma^{-1}=A^TA$.
\end{proof}
%%%%%%%%%%%%%%%%%%%%%%%%%%%%%%%%%%%%%%%%%%%%%%%%%%%%%%%%%%%
% ====================================
% --------------------
% ====================================
%%%%%%%%%%%%%%%%%%%%%%%%%%%%%%%%%%%%%%%%%%%%%%%%%%%%%%%%%%%

\section{Power comparison}\label{sec:PowerCompare}
We illustrate in two examples when our approach overpowers the classical one by linking it to the number of violated restrictions. By definition, the power of the statistical test of type B for the classical approach is given by
\[\mathbb{P}_{H_2}(LR(Y)>q(\alpha)),\]
where $q$ is the quantile of order $1-\alpha$ of the mixture of chi-squares (\ref{eqn:ChiBarTypeB}). For our approach, the power for type B is given by
\[\mathbb{P}_{H_2}(LR(Y)>q(Y,\alpha)),\]
where $q(Y,\alpha)$ is given by (\ref{eqn:SelectiveQuantile}). In order to see when our method is more powerful than the classical one, it suffices to understand when the event $\{q(Y)>q\}$ happens and of course at what frequency does it occur. For the classical approach, the quantile $q(\alpha)$ is determined by how each of the $\chi^2(i)$ is weighted for $i=1,\cdots,n-1$. If the vector of weights gives more weights to the larger degrees of freedom (the case of isotonic hypothesis), then the quantile $q(\alpha)$ is closer to the quantile of a chi-square with a high degree of freedom and vice-versa. We believe that if the covariance matrix is $I_n$, then if the null hypothesis is strictly smaller than a quadrant, then the weights tend to give more credit for the large degrees of freedom (because the polar cone of the null becomes larger than a quadrant). This is for example the case of an isotonic hypothesis. \\

We look at two situations where the vector of weights in (\ref{eqn:ChiBarTypeB}) are calculated explicitly; namely the case of $H_1$ is a quadrant, and the case when $H_1$ is the isotonic hypothesis $\mu_1\leq\cdots\leq\mu_n$.\\
In order to find out when the event $\{q(Y,\alpha)>q(\alpha)\}$ happens, we fix $n$ and $\alpha$ and then search for the largest chi-square quantile of order $1-\alpha$ smaller than $q(\alpha)$. For example, in the isotonic situation, when $n=3$ we have $q(0.05)\approx 4.6$ and the largest chi-square quantile of order $0.95$ is the quantile of $\chi^2(1)\approx 3.8$. Thus, if in our approach, the vector $Y$ violates one restriction imposed by the null, then $q(Y,\alpha)$ is the quantile of $\chi^2(1)$. If the vector $Y$ violates the two restrictions imposed by the null, then $q(Y,\alpha)$ is the quantile of $\chi^2(2)$. In the first case, our approach is more powerful than the classical one, whereas the classical one is the more powerful approach in the second case.\\

When the null $H_1$ is a quadrant, figure (\ref{fig:MaxNbMissUpsTypeB}) shows for each $p$ from $2$ to $100$, the maximum number of constraints violations below which our new approach is more powerful than the classical one. In this case, the weights are largest for quadrants with $E(p/2)+1$ where $E$ denotes the integer part. Besides, the weights are repartitioned equally around this maximum. Thus, the quantile $q(\alpha)$ is most likely to be closest to the quantile of $\chi^2(E(p/2)+1)$. If the variance of the data is not very large, then we expect that the vector of observations will most likely respect most of the signs of the vector of means $\mu$. Therefore, as long as the vector of means $\mu$ is in one of the quadrants with a number of changes of signs less than $E(p/2)+1$, then our new approach is more likely to be more powerful than the classical approach because $q(Y,\alpha)$ will most of the time be equal to the quantile of a chi-square with a degree of freedom smaller than $E(p/2)+1$. 
\begin{figure}[ht]
\centering
\includegraphics[scale = 0.5]{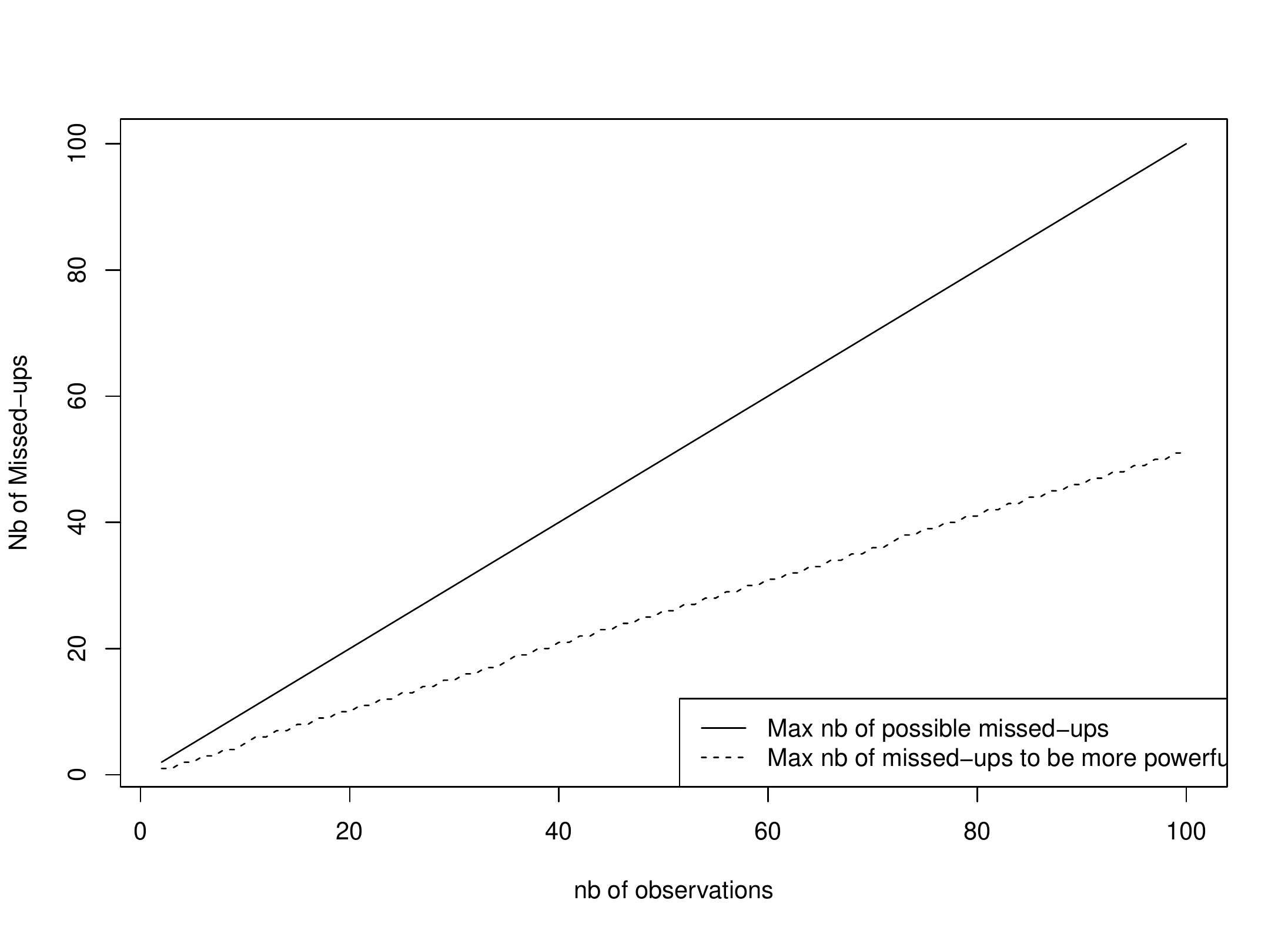}
\caption{}
\label{fig:MaxNbMissUpsTypeB}
\end{figure}
We illustrate another setup with a different polyhedral cone than an orthant. When the null $H_1$ is the isotonic hypothesis $\mu_1\leq\cdots\leq\mu_n$, figure (\ref{fig:MaxNbMissUps}) shows for each $n$ from $2$ to $100$, the maximum number of constraints violations below which our new approach is more powerful than the classical one. The figure suggests that our approach is more powerful than the classical most of the time. The cases when the classical approach overpowers our approach are the extreme cases when most of the observations violate the restrictions of the null. 
\begin{figure}[ht]
\centering
\includegraphics[scale = 0.5]{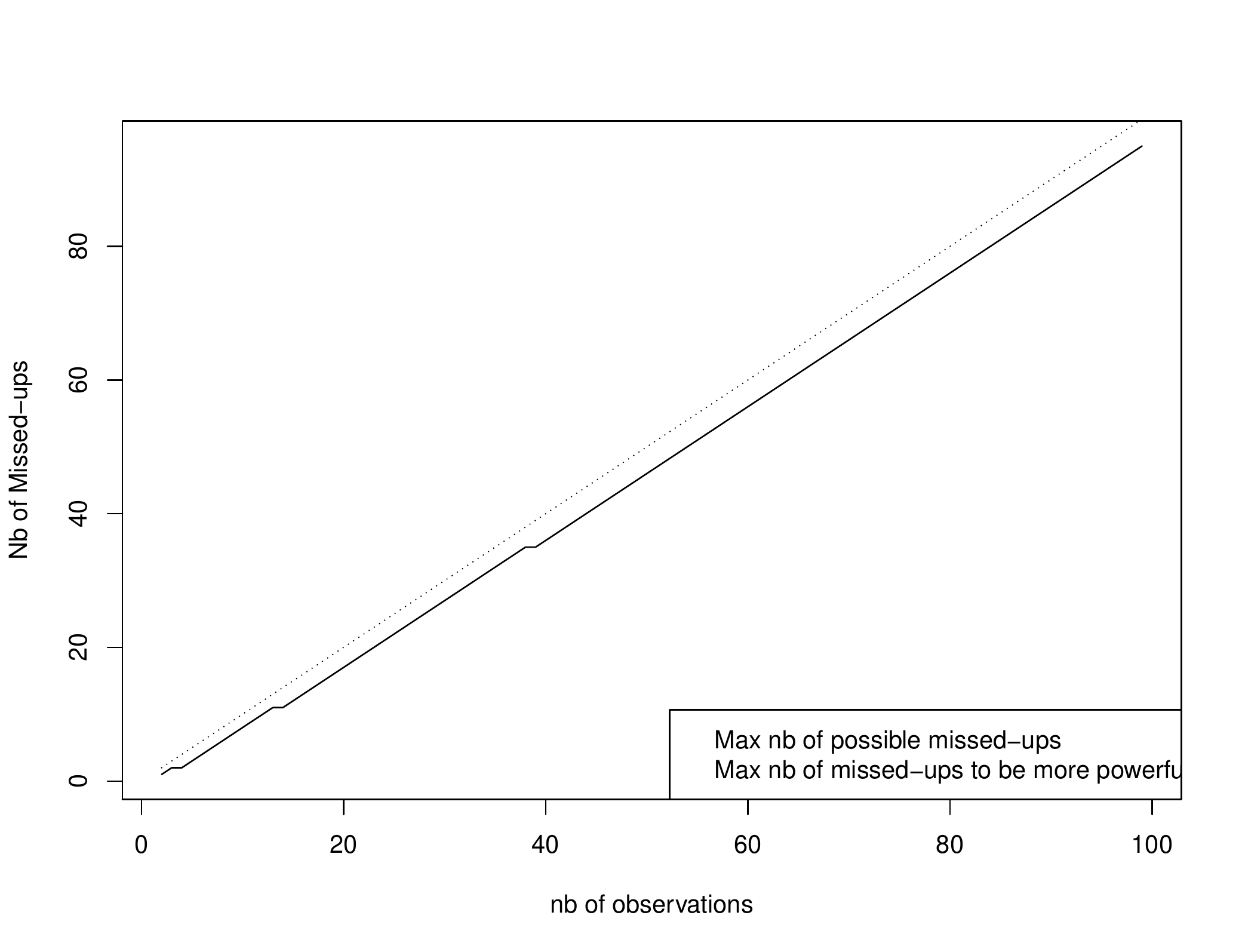}
\caption{}
\label{fig:MaxNbMissUps}
\end{figure}

%\subsection{Type A}
%Clearly, the roles here are swapped. The classical approach is always more powerful than our new approach in the alternative. However, one could sometimes be interested in the power outside the alternative as well (\citet[p. 165]{BarlowBook}). This could happen for example in testing against an increasing trend, while it may be expected that the population means show a substantial tendency to be in an increasing order, we may not be certain that all the restrictions of the simple order alternative are satisfied. In that case, our approach could be equivalent or even more powerful for some of the other regions in the space. Besides, our approach is far more simple and does not need to calculate any weights. Therefore, our new approach could still be of interest even for type A problems.
%%%%%%%%%%%%%%%%%%%%%%%%%%%%%%%%%%%%%%%%%%%%%%%%%%%%%%%%%%% 
% ====================================
% --------------------
% ====================================
%%%%%%%%%%%%%%%%%%%%%%%%%%%%%%%%%%%%%%%%%%%%%%%%%%%%%%%%%%%
%\section{Application: testing the monotonicity of real function}

%%%%%%%%%%%%%%%%%%%%%%%%%%%%%%%%%%%%%%%%%%%%%%%%%%%%%%%%%%%%%%%%%%%%%%%%%%%%%%%%%%%%%%%%%%
%%%%%%%%%%%%%%%%%%%%%%%%%%%%%%%%%%%%%%%
\section{Discussion}
We presented in this paper a novel approach to testing ordered hypotheses which changed the whole idea of how one could look at this problem. Our novel approach was shown to be interesting in terms of both power and simplicity in comparison to the classical approach. Indeed, we are replacing the quantile of a mixture of chi-squares with the quantile of only one of them according to the data and without any extra cost. This avoided the complication of calculating the weights for the mixture and provided more power in a large part of the alternative especially those which are not very far from the null hypothesis (where power is usually small).\\
The idea of using a random critical value function whose value is selected after seeing the data is rather general and could be employed in other contexts in order to obtain simple and powerful tests. In this paper, we only considered type A and type B problems, but there are more than this when working on ordered hypotheses that could be treated anew using our approach (\citet[Chap. 4, 6, 7]{SilvapulleBook}). Future research could shed light on some of them.

%%%%%%%%%%%%%%%%%%%%%%%%%%%%%%%%%%%%%%%%%%%%%%%%%%%%%%%%%%% 
% ====================================
% --------------------
% ====================================
%%%%%%%%%%%%%%%%%%%%%%%%%%%%%%%%%%%%%%%%%%%%%%%%%%%%%%%%%%%
\section{Appendix: An alternative proof in the the two dimensional case}\label{append1}
Let $Y\sim(\mu,I_2)$ be a bivariate Gaussian random variable. Consider the testing problem
\[H_1: \mu_1\leq 0, \mu_2\leq 0, \text{ against } H_2: \mu\in \mathbb{R}^p.\]
We are testing if the vector $\mu$ is in the negative quadrant of the plane. The likelihood ratio for this test is given by
\[LR = \left\{\begin{array}{rcl}
0 & \text{if} & Y_1\leq 0, Y_2\leq 0 \\
Y_1^2 & \text{if} & Y_1\geq 0, Y_2\leq 0 \\
Y_2^2 & \text{if} & Y_1\leq 0, Y_2\geq 0 \\
Y_1^2+Y_2^2 & \text{if} & Y_1\geq 0, Y_2\geq 0.
\end{array}\right.\]
We define the random quantile
\[ q(Y,\alpha) = 
 \left\{\begin{array}{rcl}
0 & \text{if} & Y_1\leq 0, Y_2\leq 0 \\
q_1 & \text{if} & Y_1\geq 0, Y_2\leq 0 \text{ or } Y_1\leq 0, Y_2\geq 0\\
q_2 & \text{if} & Y_1\geq 0, Y_2\geq 0.
\end{array}\right.
\]
We test $H_1$ against $H_2$ using the rejection region 
\[\left\{LR>q(Y,\alpha)\right\}.\]
Our claim is that 
\[\mathbb{P}_{H_1}\left(LR>q(Y,\alpha)\right)\leq \alpha.\]
\begin{lemma}\label{lem:ConditionalIncrease}
Let $Y\sim\mathcal{N}(\mu,1)$ with $\mu\in\mathbb{R}$, then function
\[\mu\mapsto \mathbb{P}_{\mu}\left(Y^2>q|Y\geq 0\right)\]
is increasing, and thus if $q = q_1$, then
\[\mathbb{P}_{\mu}\left(Y^2>q|Y\geq 0\right)\leq \mathbb{P}_{\mu}\left((Y-\mu)^2>q|Y-\mu\geq 0\right) = \alpha, \quad \forall \mu<0.\]
\end{lemma}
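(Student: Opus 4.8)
The plan is to reduce the conditional probability to an explicit ratio of Gaussian tail probabilities and then to establish monotonicity of that ratio. Write $\Phi$ for the standard normal distribution function and $\phi$ for its density. Since $q>0$, on the event $\{Y\geq 0\}$ the condition $Y^2>q$ is equivalent to $Y>\sqrt q$, so the event $\{Y^2>q\}\cap\{Y\geq 0\}$ collapses to $\{Y>\sqrt q\}$. Hence
\[
\mathbb{P}_{\mu}\left(Y^2>q\,|\,Y\geq 0\right)=\frac{\mathbb{P}_{\mu}(Y>\sqrt q)}{\mathbb{P}_{\mu}(Y\geq 0)}=\frac{1-\Phi(\sqrt q-\mu)}{1-\Phi(-\mu)},
\]
and it suffices to show that the right-hand side is nondecreasing in $\mu$.

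Next I would substitute $t=-\mu$ and set $a=\sqrt q>0$, so that the task becomes showing that $g(t)=\bigl(1-\Phi(t+a)\bigr)/\bigl(1-\Phi(t)\bigr)$ is nonincreasing in $t$. Differentiating and clearing the positive denominator, the sign of $g'(t)$ equals the sign of $\phi(t)\bigl(1-\Phi(t+a)\bigr)-\phi(t+a)\bigl(1-\Phi(t)\bigr)$, which is nonpositive precisely when the Mills ratio $r(x)=\bigl(1-\Phi(x)\bigr)/\phi(x)$ satisfies $r(t+a)\le r(t)$. Since $a>0$ and the Mills ratio is strictly decreasing, this inequality holds, giving $g'(t)\le 0$ and hence monotonicity in $\mu$. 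Equivalently, one may argue that the location family $\mathcal{N}(\mu,1)$ has monotone likelihood ratio, so the conditional law of $Y$ given $\{Y\geq 0\}$ is stochastically increasing in $\mu$, and then apply this to the nondecreasing indicator $\ind{\{Y>\sqrt q\}}$. The main obstacle is exactly this monotonicity step: it rests on the decreasingness of the Mills ratio, which is the only nontrivial ingredient and which I would either cite or deduce from the log-concavity of $\phi$.

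Finally I would read off the conclusion. Because the function is nondecreasing and $\mu<0$, we obtain $\mathbb{P}_{\mu}(Y^2>q\,|\,Y\geq 0)\le \mathbb{P}_{0}(Y^2>q\,|\,Y\geq 0)$. Under $\mu=0$ the variable $Y$ is standard normal, so $Y^2$ is independent of the sign of $Y$, whence $\mathbb{P}_{0}(Y^2>q\,|\,Y\geq 0)=\mathbb{P}(\chi^2(1)>q)$, which equals $\alpha$ when $q=q_1$. The identical computation, applied to the standard normal variable $Y-\mu$ under $\mathbb{P}_\mu$, shows $\mathbb{P}_{\mu}\bigl((Y-\mu)^2>q\,|\,Y-\mu\geq 0\bigr)=\alpha$, matching the right-hand side of the claimed display and completing the argument.
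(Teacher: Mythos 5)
Your proposal is correct and follows essentially the same route as the paper's own proof: both reduce the conditional probability to the ratio $\bigl(1-\Phi(\sqrt q-\mu)\bigr)/\bigl(1-\Phi(-\mu)\bigr)$ and establish monotonicity by differentiating and invoking the monotonicity of the (inverse) Mills ratio. The only cosmetic difference is that the paper works with the logarithmic derivative and the increasing inverse Mills ratio $\varphi/(1-\Phi)$ (citing Sampford), whereas you phrase the same fact as the decreasingness of the Mills ratio $(1-\Phi)/\varphi$.
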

\begin{proof}
We rewrite the conditional probability
\[
\mathbb{P}_{H_1}\left(Y_1^2>q_1| Y_1\geq 0\right) = \frac{\mathbb{P}_{H_1}\left(Y_1^2>q_1\cap Y_1\geq 0\right)}{\mathbb{P}_{H_1}\left(Y_1\geq 0\right)}
\]
Set $Z = Y_1-\mu_1$, and define $\Phi(x)$ and $\varphi(x)$ as the cdf and the pdf of a standard normal distribution. We have
\begin{align*}
\mathbb{P}_{H_1}\left(Y_1^2>q_1| Y\in Q_4\right)& = \frac{\mathbb{P}_{H_1}\left(Z>-\mu_1+\sqrt{q_1}\right)}{\mathbb{P}_{H_1}\left(Z\geq -\mu_1\right)} \\
& = \frac{1-\Phi(-\mu_1+\sqrt{q_1})}{1-\Phi(-\mu_1)}
\end{align*}
We prove that the function 
\[f(x) = \frac{1-\Phi(-x+\sqrt{q_1})}{1-\Phi(-x)}\]
is increasing. Indeed, since $\log(f(x)) = \log(1-\Phi(-x+\sqrt{q_1})) - \log(1-\Phi(-x))$, its derivative with respect to $x$ is given by
\[\log(f(x))' = \frac{\varphi(-x+\sqrt{q_1})}{1-\Phi(-x+\sqrt{q_1})} - \frac{\varphi(-x)}{1-\Phi(-x)}.\]
The inverse of Mill's ratio $\frac{\varphi(a)}{1-\Phi(a)}$ is increasing (\cite{Sampford}), so that
\[\frac{\varphi(-x)}{1-\Phi(-x)} < \frac{\varphi(-x+\sqrt{q_1})}{1-\Phi(-x+\sqrt{q_1})}\]
and function $\log(f(x))$ is increasing and so does function $f$. Thus
\[\forall \mu\leq 0, \qquad f(\mu)\leq f(0) = \alpha.\]
\end{proof}
We go back to our new test. We can decompose the probability of rejection according to the position of the vector $Y$ in the plane. Denote $Q_1,Q_3$ the nonnegative and non-positive quadrants respectively, and $Q_2, Q_4$ the quadrants $\{y\in\mathbb{R}^2, y_1<0, y_2>0\}$ and $\{y\in\mathbb{R}^2, y_1>0, y_2<0\}$. We have
\begin{multline}
\mathbb{P}_{H_1}\left(LR>q(Y,\alpha)\right) = \mathbb{P}_{H_1}\left(Y_1^2+Y_2^2>q_2| Y\in Q_1\right)\mathbb{P}_{H_1}\left(Y\in Q_1\right) + \\ 
\mathbb{P}_{H_1}\left(Y_1^2>q_1| Y\in Q_4\right)\mathbb{P}_{H_1}\left(Y\in Q_4\right)+ 
\mathbb{P}_{H_1}\left(Y_2^2>q_1| Y\in Q_2\right)\mathbb{P}_{H_1}\left(Y\in Q_2\right)
\label{eqn:All4QuartilesAppendix}
\end{multline}
We treat each one of the conditional probabilities separately. We prove that
\[\mathbb{P}_{H_1}\left(Y_1^2>q_1| Y\in Q_4\right)\leq \alpha.\]
Indeed, since $Y_1$ and $Y_2$ are independent, we have using Lemma \ref{lem:ConditionalIncrease}
\begin{align*}
\mathbb{P}_{H_1}\left(Y_1^2>q_1| Y\in Q_4\right) & = \mathbb{P}_{H_1}\left(Y_1^2>q_1| Y_1\geq 0\right) \\
& \leq \alpha
 \end{align*}
Similarly, we may write the same thing for $Y_2$. We have now
\begin{align}
\mathbb{P}_{H_1}\left(Y_1^2>q_1| Y\in Q_4\right)\leq & \alpha \label{eqn:Q4Appendix} \\
\mathbb{P}_{H_1}\left(Y_2^2>q_1| Y\in Q_2\right)\leq & \alpha.
\label{eqn:Q2Appendix}
\end{align}
We use Lemma \ref{lem:ConditionalIncrease} in order to prove that 
\[\mathbb{P}_{H_1}\left(Y_1^2+Y_2^2>q_2| Y\in Q_1\right)\leq \alpha.\]
We apply the lemma twice. First, let $Z_1=Y_1-\mu_1\sim\mathcal{N}(0,1)$.
\begin{multline*}
\mathbb{P}_{H_1}\left(Y_1^2+Y_2^2>q_2| Y_1\geq 0, Y_2\geq 0\right) = \mathbb{P}_{H_1}\left(Y_1^2>q_2-Y_2^2| Y_1\geq 0, Y_2\geq 0\right)\\
 = \int_0^{q_2}{\frac{\mathbb{P}_{H_1}\left(Y_1^2>q_2-y_2^2, Y_1\geq 0\right)}{\mathbb{P}_{H_1}\left(Y_1\geq 0\right)}\frac{f_{Y_2}(y_2)}{\mathbb{P}_{H_1}\left(Y_2\geq 0\right)}dy_2} + \int_{q_2}^{\infty}{\frac{f_{Y_2}(y_2)}{\mathbb{P}_{H_1}\left(Y_2\geq 0\right)}dy_2} \\
\leq \int_0^{q_2}{\frac{\mathbb{P}_{H_1}\left(Z_1>\sqrt{q_2-y_2^2}\right)}{\mathbb{P}_{H_1}\left(Z_1\geq 0\right)}\frac{f_{Y_2}(y_2)}{\mathbb{P}_{H_1}\left(Y_2\geq 0\right)}dy_2} + \int_{q_2}^{\infty}{\frac{f_{Y_2}(y_2)}{\mathbb{P}_{H_1}\left(Y_2\geq 0\right)}dy_2} \\
\leq \int_0^{q_2}{\frac{\mathbb{P}_{H_1}\left(Z_1^2>q_2-y_2^2, Z_1\geq 0\right)}{\mathbb{P}_{H_1}\left(Z_1\geq 0\right)}\frac{f_{Y_2}(y_2)}{\mathbb{P}_{H_1}\left(Y_2\geq 0\right)dy_2}} + \int_{q_2}^{\infty}{\frac{f_{Y_2}(y_2)}{\mathbb{P}_{H_1}\left(Y_2\geq 0\right)}dy_2} \\
\leq \mathbb{P}_{H_1}\left(Z_1^2+Y_2^2>q_2| Z_1\geq 0, Y_2\geq 0\right) \hspace{7.5cm}
\end{multline*}
We do the same calculation again by swapping the roles of $Y_2$ and $Z_1$. Let $Z_2$ be a standard Gaussian random variable independent of $Z_1$. We have
\[\mathbb{P}_{H_1}\left(Z_1^2+Y_2^2>q_2| Z_1\geq 0, Y_2\geq 0\right)\leq \mathbb{P}_{H_1}\left(Z_1^2+Z_2^2>q_2| Z_1\geq 0, Z_2\geq 0\right)\]
Now, since $Z=(Z_1,Z_2)^T$ is a standard Gaussian random vector $\mathcal{N}(0,I_2)$, then its norm is independent from its direction (\citet[Lemma 3.13.1]{SilvapulleBook}). Therefore,
\begin{align}
\mathbb{P}_{H_1}\left(Y_1^2+Y_2^2>q_2| Y_1\geq 0, Y_2\geq 0\right) & \leq \mathbb{P}_{H_1}\left(Z_1^2+Z_2^2>q_2| Z_1\geq 0, Z_2\geq 0\right) \nonumber\\
& \leq \mathbb{P}_{H_1}\left(Z_1^2+Z_2^2>q_2\right) \nonumber\\
& \leq \alpha. \label{eqn:Q1Appendix}
\end{align}
We conclude by inserting (\ref{eqn:Q1Appendix},\ref{eqn:Q2Appendix},\ref{eqn:Q4Appendix}) in (\ref{eqn:All4QuartilesAppendix}) that
\begin{align*}
\mathbb{P}_{H_1}(LR>q(Y,\alpha)) \leq & \alpha\left(\mathbb{P}_{H_1}\left(Y\in Q_1\right)+\mathbb{P}_{H_1}\left(Y\in Q_2\right)+\mathbb{P}_{H_1}\left(Y\in Q_4\right)\right) \\
\leq & (1-\mathbb{P}_{H_1}(Y\in Q_3))\alpha \\
\leq & \frac{3}{4}\alpha \\
< & \alpha.
\end{align*}

%%%%%%%%%%%%%%%%%%%%%%%%%%%%%%%%%%%%%%%%%%%%%%%%%%%%%%%%%%%%%%%%%%

\section{Appendix: An alternative proof for the case of n orthants}\label{append2}
The previous result is easily generalized to $\mathbb{R}^p$. Let $Y\sim\mathcal{N}(\mu,I_p)$. We want to test $H_1:\mu\leq 0$ against all alternatives. The likelihood ratio is given by
\[LR = \|Y-Q\|^2 = \sum_{i=1}^n{Y_i^2\ind{Y_i\geq 0}}.\]
The function $q(Y,\alpha)$ is given by
\[q(Y,\alpha) = q_{\chi^2(N)}, \qquad N = \sum_{i=1}^n{\ind{Y_i\geq 0}}.\]
Let $Q$ be a quadrant with $k$ positive sides. Without loss in generality, assume that
\[Q = \{y\in\mathbb{R}^p, y_1\geq 0, \cdots, y_k\geq 0, y_{k+1}\leq 0,\cdots, y_p\leq 0\}\]
We have
\begin{align*}
\mathbb{P}_{H_1}(LR(Y)>q_k|Y\in Q) = & \mathbb{P}_{H_1}\left(\sum_{i=1}^kY_i^2>q_k\left|Y_1\geq 0, \cdots, Y_k\geq 0, Y_{k+1}\leq 0,\cdots, Y_p\leq 0\right.\right) \\
 = & \mathbb{P}_{H_1}\left(\sum_{i=1}^kY_i^2>q_k\left|Y_1\geq 0, \cdots, Y_k\geq 0\right.\right)
\end{align*}
The second lines comes from the fact that $Y_1,\cdots,Y_n$ are assumed independent. Define the set $S$ as follows
\[S = \left\{(y_2,\cdots,y_k)\in\mathbb{R}^{k-1}| \sum_{i=2}^ky_i^2<q,\;\; y_i\leq 0, \forall i=2,\cdots,k\right\}\]
Denote $y_{2,k}=(y_2,\cdots,y_k)$, $Y_{2,k}=(Y_2,\cdots,Y_k)$ and $\varphi_{2,k}$ the joint density of $Y_{2,k}$. We may now rewrite the conditional probability from above as
\begin{align*}
\mathbb{P}_{H_1}\left(Y_1^2+\|Y_{2,k}\|^2>q_k\left|Y_1\geq 0, Y_{2,k}\geq 0\right.\right) = &  \int_{y_{2,k}\in S}{\frac{\mathbb{P}_{H_1}\left(Y_1>\sqrt{q-\|y_{2,k}\|^2}\right)}{\mathbb{P}_{H_1}(Y_1\geq 0)}\frac{\varphi_{2,k}(y_{2,k})}{\mathbb{P}_{H_1}(Y_{2,k}\geq 0)}dy_1} \\
& + \int_{y_{2,k}\in \mathbb{R}^{k-1}\setminus S}{\frac{\varphi_{2,k}(y_{2,k})}{\mathbb{P}_{H_1}(Y_{2,k}\geq 0)}dy_1}
\end{align*}
Denote $Z_1 = Y_1-\mu_1$. Using Lemma \ref{lem:ConditionalIncrease}, we have
\[\frac{\mathbb{P}_{H_1}\left(Y_1>\sqrt{q-\|y_{2,k}\|^2}\right)}{\mathbb{P}_{H_1}(Y_1\geq 0)}\leq \frac{\mathbb{P}\left(Z_1>\sqrt{q-\|y_{2,k}\|^2}\right)}{\mathbb{P}_{H_1}(Z_1\geq 0)}\]
Thus
\begin{align*}
\mathbb{P}_{H_1}\left(Y_1^2+\|Y_{2,k}\|^2>q_k\left|Y_1\geq 0, Y_{2,k}\geq 0\right.\right) \leq &  \int_{y_{2,k}\in S}{\frac{\mathbb{P}_{H_1}\left(Z_1>\sqrt{q-\|y_{2,k}\|^2}\right)}{\mathbb{P}_{H_1}(Z_1\geq 0)}\frac{\varphi_{2,k}(y_{2,k})}{\mathbb{P}_{H_1}(Y_{2,k}\geq 0)}dy_{2,k}} \\
& + \int_{y_{2,k}\in \mathbb{R}^{k-1}\setminus S}{\frac{\varphi_{2,k}(y_{2,k})}{\mathbb{P}_{H_1}(Y_{2,k}\geq 0)}dy_{2,k}} \\
\leq & \mathbb{P}_{H_1}\left(Z_1^2+\|Y_{2,k}\|^2>q_k\left|Z_1\geq 0, Y_{2,k}\geq 0\right.\right)
\end{align*}
We iterate the previous argument on $Y_2$ and then on $Y_3$ and so on until we get
\[\mathbb{P}_{H_1}\left(\sum_{i=1}^kY_i^2>q_k\left|Y_1\geq 0, \cdots, Y_k\geq 0\right.\right)\leq \mathbb{P}\left(\sum_{i=1}^kZ_i^2>q_k\left|Z_1\geq 0, \cdots, Z_k\geq 0\right.\right)\]
where $Z_i = Y_i - \mu_i\sim\mathcal{N}(0,1)$. Now the joint distribution of the Gaussian vector $Z=(Z_1,\cdots,Z_p)$ is $\mathcal{N}(0,I_p)$, and hence its norm is independent from its direction (\citet[Lemma 3.13.1]{SilvapulleBook}). therefore, we may write
\begin{align*}
\mathbb{P}\left(\sum_{i=1}^kZ_i^2>q_k\left|Z_1\geq 0, \cdots, Z_k\geq 0\right.\right) &= \mathbb{P}\left(\sum_{i=1}^kZ_i^2>q_k\right)\\
& \leq \alpha
\end{align*}
Thus
\begin{align*}
\mathbb{P}_{H_1}\left(LR(Y)>q_k|Y\in Q\right) =& \mathbb{P}_{H_1}\left(\sum_{i=1}^kY_i^2>q_k\left|Y_1\geq 0, \cdots, Y_k\geq 0\right.\right) \\
\leq & \mathbb{P}\left(\sum_{i=1}^kZ_i^2>q_k\right) \\
\leq & \alpha.
\end{align*}
For all the other orthants, an analogue argument holds. Denote now $Q_j = \{y\in\mathbb{R}^p|y_i\geq 0,\forall i\in I_j, y_l\leq 0, \forall l\notin I_j\}$ and $r_j = |I_j|$ for $j=1,\cdots,2^{p}-1$. Let $H_1=Q_{2^p}$. We may now state the following
\begin{align*}
\mathbb{P}_{H_1}\left(LR(Y)>q_k\right) =& \sum_{j=1}^{2^p-1}\mathbb{P}_{H_1}\left(\sum_{i\in I_j}Y_i^2>q_{r_j}|Y\in Q_j\right) \\
= & \sum_{j=1}^{2^p-1}\mathbb{P}_{H_1}\left(\sum_{i\in I_j}Y_i^2>q_{r_j}|Y_i\geq 0, \forall i\in I_j\right) \\
\leq & \left(1-\mathbb{P}(Y\in Q_{2^p})\right)\alpha \\
\leq & \left(1-\frac{1}{2^p}\right)\alpha
\end{align*}
We could now adjust the critical value so that we use the remaining $\frac{1}{2^p}$. Redefine the critical value as
\[\bar{q}(Y,\alpha) = q_{\chi^2(r_j)}\left(1-\frac{2^{p}}{2^{p}-1}\alpha\right), \qquad \text{if } Y\in Q_j.\]
%%%%%%%%%%%%%%%%%%%%%%%%%%%%%%%%%%%%%%%%%%%%%%%%%%%%%%%%%%%
% ====================================
% --------------------
% ====================================
%%%%%%%%%%%%%%%%%%%%%%%%%%%%%%%%%%%%%%%%%%%%%%%%%%%%%%%%%%%
\section{Appendix: A new solution for type A problems which does not require the calculation of the weights: The two dimensional case}\label{append3}
Let $Y\sim(\mu,I_2)$ be a bivariate Gaussian random variable. Consider the testing problem
\[H_0: \mu= 0, \text{ against } H_1: \mu\geq 0.\]
Denote $Q_1,Q_2,Q_3$ and $Q_4$ the four quandrants of the plane. The likelihood ratio is given by
\[LR = \|Y\|^2 - \|Y-\hat{\mu}\|^2\]
where $\hat{\mu}$ is the maximum likelihood estimator of $\mu$ subject to $\mu\geq 0$. It is equal to the projection of $Y$ on the first quadrant $Q_1$. Thus,
\[\hat{\mu} = \left\{\begin{array}{rcl}
(Y_1,Y_2)^T & \text{if} & Y\in Q_1 \\
(0,Y_2)^T & \text{if} & Y\in Q_2 \\
(Y_1,0)^T & \text{if} & Y\in Q_4 \\
(0,0)^T & \text{if} & Y\in Q_3.
\end{array}\right.\]
The LR is then give by
\[LR=\|\Pi(Y|Q_1)\|^2=\|\hat{\mu}\|^2=\left\{\begin{array}{rcl}
Y_1^2+Y_2^2 & \text{if} & Y\in Q_1 \\
Y_2^2 & \text{if} & Y\in Q_2 \\
Y_1^2 & \text{if} & Y\in Q_4 \\
0 & \text{if} & Y\in Q_3.
\end{array}\right.\]
In the literature, testing $H_1$ against $H_2$ at level $\alpha$ using the LRT is done by looking for $c>0$ such that
\[\frac{1}{2}\mathbb{P}(\chi^2(1)>c) + \frac{1}{4}\mathbb{P}(\chi^2(2)>c) \leq \alpha.\]
We propose in this paper to do the test differently. Since the LR is conditionally distributed as a $\chi^2(1)$ when $Y$ has one negative coordinate and one positive one, then we will only test the LR against the quantile of the $\chi^2(1)$, and there is no need to consider the whole mixture of chi-squares. When $Y$ is in the positive quadrant, the LR is conditionally distributed as a $\chi^2(2)$, and we will test the LR against the quantile of the $\chi^2(2)$. In other words, we define the random quantile
\[ q(Y,\alpha) = 
 \left\{\begin{array}{rcl}
0 & \text{if} & Y\in Q_3 \\
q_1 & \text{if} & Y\in Q_2, \text{ or }, Y\in Q_4\\
q_2 & \text{if} & Y\in Q_1.
\end{array}\right.
\]
We test $H_0$ against $H_1$ using the rejection region 
\[\left\{LR>q(Y,\alpha)\right\}.\]
Our claim is that 
\[\mathbb{P}_{H_0}\left(LR>q(Y,\alpha)\right)\leq \alpha.\]
Indeed, by conditioning on the position of $Y$ in the plane, we may write
\begin{multline*}
\mathbb{P}_{H_0}\left(LR>q(Y,\alpha)\right) = \mathbb{P}_{H_0}\left(Y_1^2+Y_2^2>q_2| Y\in Q_1\right)\mathbb{P}_{H_0}\left(Y\in Q_1\right) + \\ 
\mathbb{P}_{H_0}\left(Y_1^2>q_1| Y\in Q_4\right)\mathbb{P}_{H_0}\left(Y\in Q_4\right)+ 
\mathbb{P}_{H_0}\left(Y_2^2>q_1| Y\in Q_2\right)\mathbb{P}_{H_0}\left(Y\in Q_2\right)
\end{multline*}
Since under the null $Y$ is distributed as $\mathcal{N}(0,I_2)$, its norm and its direction are independent. Therefore
\begin{align*}
\mathbb{P}_{H_0}\left(LR>q(Y,\alpha)\right) = & \mathbb{P}_{H_0}\left(Y_1^2+Y_2^2>q_2\right)\mathbb{P}_{H_0}\left(Y\in Q_1\right) + \mathbb{P}_{H_0}\left(Y_1^2>q_1\right)\mathbb{P}_{H_0}\left(Y\in Q_4\right)\\ 
& + \mathbb{P}_{H_0}\left(Y_2^2>q_1\right)\mathbb{P}_{H_0}\left(Y\in Q_2\right) \\
\leq & \frac{3}{4}\alpha
\end{align*}
Therefore, we do not need to use a mixture of chi-squares in order to test $H_0$ against $H_1$, and we can instead choose which chi-square to use according to the position of $Y$ in the plane. Moreover, we could also make the test even a little more powerful by adjusting it to provide exactly a level-$\alpha$ test, that is
\[ q(Y,\alpha) = 
 \left\{\begin{array}{rcl}
0 & \text{if} & Y\in Q_3 \\
q_{\chi^2(1)}\left(1-\frac{4}{3}\alpha\right) & \text{if} & Y\in Q_2, \text{ or }, Y\in Q_4\\
q_{\chi^2(2)}\left(1-\frac{4}{3}\alpha\right) & \text{if} & Y\in Q_1.
\end{array}\right.
\]

%%%%%%%%%%%%%%%%%%%%%%%%%%%%%%%%%%%%%%%%%%%%%%%%%%%%%%%%%%%
% ====================================
% --------------------
% ====================================
%%%%%%%%%%%%%%%%%%%%%%%%%%%%%%%%%%%%%%%%%%%%%%%%%%%%%%%%%%%
\section{Appendix: A new solution for type A problems which does not require the calculation of the weights: The general case}\label{append4}
Let $Y\sim\mathcal{N}(\mu,V)$. We test $H_0:\mu=0$ against $H_1:\mu\in\C$ with $\C$ a polyhedral cone in $\mathbb{R}^p$. The LR is given by
\[LR = \bar{\chi}^2(V,\C) = Y^TV^{-1}Y - \min_{\mu\in\C} (Y-\mu)^TV^{-1}(Y-\mu)\]
According to \cite[Proposition 3.4.1]{SilvapulleBook}, we have
\[LR = \|\Pi(Y|\C)\|_V^2.\]
Using Lemma 3.13.5 from \cite{SilvapulleBook}, the projection on a polyhedral cone is characterized through the projection on the linear spaces spanned by its faces. In other words, there exists a collection of faces of $\C$, say $\{F_1,\cdots,F_K\}$ such that the collection of their relative interiors, $\{ri(F_1),\cdots,ri(F_K)\}$, forms a partition of $\C$. Further,
\[\|\Pi(Y|\C)\|_V^2 = \sum_{i=1}^K{\ind{\Pi(Y|\C)\in ri(F_i)}\|P_iY\|_V^2}\]
where $P_i$ is the projection matrix onto the linear space spanned by $F_i$. Denote $r_i=rank(P_i)$, and define the random critical value
\begin{equation}
q(\C,Y,\alpha) = \sum_{i=1}^K{\ind{\Pi\left(Y|\C\right)\in ri(F_i)}q_{\chi^2(r_i)}(1-\alpha)}
\label{eqn:SelectiveQuantileTypeA}
\end{equation}
Our claim is that 
\[
\mathbb{P}_{H}\left(LR(\C)>q(\C,Y,\alpha)\right) \leq \alpha.
\]
We start with the case when $V=I_p$. We have
\begin{align*}
\mathbb{P}_{H}\left(LR(\C)>q(\C,Y,\alpha)\right) = & \sum_{j=1}^K{\mathbb{P}_{H_0}\left(\|P_jY\|^2>q_{r_j}|\Pi(Y|\C)\in ri(F_j)\right)\mathbb{P}\left(\Pi(Y|\C)\in ri(F_j)\right)}
\end{align*}
Lemma 3.13.2 from \cite{SilvapulleBook} states that the event
\[\left\{\Pi\left(Y|\C\right)\in ri(F_j)\right\}\]
is equivalent to the event
\[\{P_jY\in ri(F_j),\; (I_n-P_j)Y\in F_j^{\bot}\cap \C\}\]
and since $P_j$ is a projection matrix, the random variables $P_jY$ and $(I_n-P_j)Y$ are independent. Thus
\begin{align*}
\mathbb{P}_{H_0}\left(\|P_jY\|^2>q_{r_j}|\Pi(Y|\C)\in ri(F_j)\right) = & \mathbb{P}_{H_0}\left(\|P_jY\|^2>q_{r_j}|P_jY\in ri(F_j),\; (I_n-P_j)Y\in F_j^{\bot}\cap \C\right) \\
= & \mathbb{P}_{H_0}\left(\|P_jY\|^2>q_{r_j}|P_jY\in ri(F_j)\right)
\end{align*}
Finally, since $Y\sim\mathcal{N}(0,I_n)$, then according to Lemma 3.13.3 from \cite{SilvapulleBook}, the distribution of $\|P_i X\|^2$ conditionally on $P_iY\in ri(F_j)$ is $\chi^2(r_j)$. Thus
\[\mathbb{P}_{H_0}\left(\|P_jY\|^2>q_{r_j}|P_jY\in ri(F_j)\right)\leq \alpha\]
Hence
\[\mathbb{P}_{H}\left(LR(\C)>q(\C,Y,\alpha)\right) \leq  (1-\mathbb{P}_{H_0}(\Pi(Y|\C)\in ri(F_0)))\alpha \]
where $F_0=\{0\}$. Besides, the event $\{\Pi(Y|\C)\in ri(F_0)\}$ is equivalent to the event $\{Y\in \C^{o}\}$. In case we have a way to calculate the probability $\mathbb{P}_{H_0}(Y\in\C^{o})$, then we could recalibrate the critical value and gain more power. For example, in the case of orthants, $\mathbb{P}_{H_0}(Y\in\C^{o})=2^{-p}$.\\
The end of the proof is the same as for Proposition \ref{prop:typeBGeneral}.

\bibliographystyle{plainnat}
\bibliography{biblioFile}
\end{document}